\newtheorem{theorem}{Theorem}[section]
\newtheorem{lemma}[theorem]{Lemma}
\newtheorem{claim}[theorem]{Claim}
\newtheorem{definition}[theorem]{Definition}
\begin{document}

\title{New Approximation Algorithms for Fair $k$-median Problem}

\author{Di Wu\thanks{School of Computer Science and Engineering, Central South University, Changsha, China. Email: \texttt{csuwudi@csu.edu.cn}}
\and Qilong Feng\thanks{School of Computer Science and Engineering, Central South University, Changsha, China. Email: \texttt{csufeng@mail.csu.edu.cn}}
\and Jianxin Wang\thanks{School of Computer Science and Engineering, Central South University, Changsha, China. Email: \texttt{jxwang@mail.csu.edu.cn}}}
\date{}
\maketitle
\begin{abstract}
The fair $k$-median problem is one of the important clustering problems. The current best approximation ratio is 4.675 for this problem with 1-fair violation, which was proposed by Bercea et al. [APPROX-RANDOM'2019]. As far as we know, there is no available approximation algorithm for the problem without any fair violation.
In this paper, we consider the fair $k$-median problem in bounded doubling metrics and general metrics. We provide the first QPTAS for fair $k$-median problem in doubling metrics. Based on the split-tree decomposition of doubling metrics, we present a dynamic programming process to find the candidate centers, and apply min-cost max-flow method to deal with the assignment of clients. Especially, for overcoming the difficulties caused by the fair constraints, we construct an auxiliary graph and use minimum weighted perfect matching to get part of the cost. For the fair $k$-median problem in general metrics, we present an approximation algorithm with ratio $O(\log k)$, which is based on the embedding of given space into tree metrics, and the dynamic programming method. Our two approximation algorithms for the fair $k$-median problem are the first results for the corresponding problems without any fair violation, respectively.
\end{abstract}

\section{Introduction}\label{sec1}

As one of the most commonly studied problems, clustering aims to give a "good" partition of a set of points of a metric space into different groups so that points with similar attributes should be in the same group.
The clustering algorithm may be biased if the data set involves some sensitive attributes (gender, race, or age). To solve this deviation caused by some sensitive attributes of the data set, the \emph{fair} clustering problem was proposed.  Over the past few years, lots of attention have been paid on the fair clustering problem (\cite{abbasi2021fair, ahmadian2020fair, backurs2019scalable, chen2019proportionally,
Chierichetti0LV17, ghadiri2021socially, BeraCFN19, esmaeili2020probabilistic}).

\cite{Chierichetti0LV17} first proposed the definition of \emph{redistricting fairness} for clustering, which refers to the balance of disparate impact and fair representation of each protected class in every cluster. \cite{Bercea0KKRS019} first proposed a generalized and tunable notion of redistricting fairness where asserts that protected classes should maintain an upper bound and a lower bound proportion of the cluster size in clustering. This kind of fairness model is called $(\alpha,\beta)$-proportional fair model. The input for fair $k$-median problem is a set of points $\mathcal {C}$, a set of candidate centers $\mathcal{F}$ in a metric space together with an integer $l$, and two vectors $\boldsymbol \alpha$ and $\boldsymbol \beta$, the fair $k$-median problem asks for a set $F\subseteq \mathcal{F}$ of $k$ points, called $\emph centers$, together with an assignment $\mu:\mathcal{C}\rightarrow F$, where each client in the data set $\mathcal {C}$ is entitled a color $color(i)$ and $\mathcal {C}$ is divided into $l$ disjoint groups $\mathcal {C}_1,\mathcal {C}_2,\ldots,\mathcal {C}_l$, the goal is to minimize the $k$-median objective function satisfying that the number of each type clients within $color(i)$ in each cluster conforms to a certain proportion between $\alpha_i$ and $\beta_i$.

\cite{Bercea0KKRS019} first proposed a bicriteria approximation scheme with $4.675$-approximation and $1$-fairness violation in polynomial time for the fair $k$-median problem in metric space, which can be extended to a $3/3.488/62.856$-approximation for the fair $k$-center/facility location/$k$-means problem within a $1$-fairness violation, respectively.
For the special case of $\alpha_i=\alpha, \beta_i=0$ ($\alpha$ is a fixed number and $i\in [l]$), \cite{ahmadian2019clustering} gave a $3$-approximation with $2$-fairness violation algorithm for fair $k$-center problem in metric space.
As a generalization for the above $(\alpha,\beta)$-proportional fairness model, \cite{BeraCFN19} proposed a more generalized notion of fairness such that each client can belong to multiple types. They gave an approximation algorithm with $\rho+2$-approximation and $4\Delta+3$-fairness violation in polynomial time for the fair $k$-median, fair $k$-center and fair $k$-means problems, where $\rho$ is the approximation ratio of the known $k$-clustering algorithm, and $\Delta$ denotes the maximum number of types that each client belongs to. \cite{harb2020kfc} gave a $3$-approximation and $4\Delta+3$-fairness violation for the fair $k$-center problem in polynomial time.  As far as we know, for the fair $k$-median problem, no approximation algorithm without fairness violation is available for any non-trivial metrics.

\section{Results and Techniques}\label{sec2}

We propose the first qusi-polynomial time approximation scheme(QPTAS) for the fair $k$-median problem in doubling metrics, the first $O(\log k)$-approximation algorithm for the fair $k$-median problem in general metrics, and these two results have no fairness violation.

\begin{theorem}
Given an instance of size $n$ of the fair $k$-median problem in metrics of fixed doubling dimension $d$, there exists a randomized $(1+\epsilon)$-approximation algorithm with running time $\tilde{O}(n^{{(\frac{\log n}{\epsilon})}^{O(d)}l})$.
\end{theorem}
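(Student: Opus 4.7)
The plan is to combine a randomized hierarchical decomposition of the doubling metric with a portal-based dynamic program over the resulting split tree, and to realize each DP transition by a min-cost flow together with a minimum-weight perfect matching that enforces the fair proportion constraints. First I would compute a Talwar-style split-tree decomposition of $\mathcal{C}\cup\mathcal{F}$ whose cells have geometrically decreasing diameters $2^i$ and whose depth is $O(\log n)$; for each cell $V$ I would place a $\epsilon 2^{i(V)}$-net on $\partial V$, giving a portal set of size $(\log n/\epsilon)^{O(d)}$. A standard argument shows that with high probability, rerouting every optimal client-to-center segment through the nearest portals on each cell it crosses loses only a $(1+\epsilon)$ factor, so it suffices to search over portal-respecting solutions.

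Second, I would run a bottom-up DP on the split tree. A DP state at cell $V$ records: the number of centers opened in $V$, for each portal $p\in\partial V$ and each color $c\in[l]$ the number of color-$c$ clients (up to a factor $(1+\epsilon)$) routed through $p$, and for each ``active'' cluster whose center lies outside $V$ the tuple of color counts already absorbed by $V$. Rounding counts to powers of $(1+\epsilon)$ bounds the number of configurations per cell by roughly $n^{(\log n/\epsilon)^{O(d)}\cdot l}$, which, together with $O(\log n)$ levels and the $k$ centers to place, gives the claimed running time $\tilde{O}(n^{(\log n/\epsilon)^{O(d)}\, l})$.

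Third, to evaluate the cost of combining child states into a parent state, I would build an auxiliary bipartite graph whose left vertices are ``color-$c$ demand tokens'' emitted by one child (clients assigned to a center living in the sibling) and whose right vertices are ``color-$c$ capacity tokens'' emitted by the sibling's opened centers; edge weights are the induced routing costs through the shared portals. A minimum-weight perfect matching on this graph yields the optimal way to glue the two children's partial fair profiles into a consistent parent profile. At leaf cells, where all relevant clients and centers are known, the actual assignment is computed by a min-cost max-flow in a network whose capacity constraints encode $\alpha_i|F_j|\le |\mu^{-1}(F_j)\cap\mathcal{C}_i|\le \beta_i|F_j|$ for every opened center $F_j$, with the cluster sizes $|F_j|$ being the guesses stored in the DP state.

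The main obstacle will be the global coupling introduced by the fair constraints: the ratios depend on cluster sizes which are not known until every client of the cluster has been reached by the DP, so the state must carry enough information to commit to these sizes without blowing up. I would handle this by \emph{guessing} the total size and color composition of each opened cluster at the moment its center is placed, rounded to powers of $(1+\epsilon)$; since each center sits in some cell and only $l(\log n/\epsilon)^{O(1)}$ such guesses are needed per center, the state space remains within the budget, and the final DP answer is the minimum over all guesses consistent at the root. The correctness then follows by showing that the optimal portal-respecting solution has a legal trace in this DP, and the $(1+\epsilon)$ loss comes solely from portal rerouting and the rounding of counts.
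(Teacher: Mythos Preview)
Your high-level architecture matches the paper: randomized split-tree decomposition with portals, a bottom-up DP whose state records per-portal, per-color flow counts and the number of opened centers, a minimum-weight perfect matching to price the portal-to-portal routing in each DP transition, and a min-cost flow to realize the final assignment. Two aspects of your plan, however, depart from the paper and one of them is a real gap.

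First, the paper does \emph{not} round the client counts to powers of $(1+\epsilon)$; it keeps exact integers in $\{0,\dots,n\}$ for each (portal, color, direction) slot. This is what buys zero fairness violation: at a leaf that contains a single facility, the incoming vector $\boldsymbol Q_1^E$ is exactly the color profile of that facility's cluster, and the constraint $\alpha_t \le q_t/\sum_t q_t \le \beta_t$ is checked there verbatim. Your rounding would only certify approximate proportions, so the output could violate fairness; since the whole point of the theorem is a guarantee with no violation, this is a genuine gap. Note that exact counts already give the stated running time: with $m=(\log n/\epsilon)^{O(d)}$ portals and $2l$ integer counts per portal, the state space is $n^{O(ml)}=n^{(\log n/\epsilon)^{O(d)}l}$, so rounding is unnecessary.

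Second, your extra state components (``active clusters whose center lies outside $V$'' and per-center size/color guesses) are not used in the paper and would threaten the state-space bound if taken literally (there can be $k$ external centers). The paper sidesteps this entirely: the DP never tracks \emph{which} external center a leaving client is headed to, only how many of each color leave through each portal; fairness is enforced solely at the destination facility's leaf, where all incoming counts are visible. Consequently no cluster-level guessing is needed. Also, two smaller corrections: the portal net parameter must be $\rho=O(\epsilon/(d\log n))$ (not $\epsilon$) so that the rerouting loss summed over $O(\log n)$ levels stays $\epsilon\cdot\mathrm{OPT}$; and the min-cost flow in the paper is run once \emph{after} the DP to extract an explicit client-to-center assignment, not at the leaves.
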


\begin{theorem}
Given an instance of size $n$ of the fair $k$-median problem in general metric space, there exists a randomized $O(\log k)$-approximation algorithm with running time $n^{O(l)}$.
\end{theorem}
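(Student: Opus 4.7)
The plan is to combine a randomized tree embedding with a bottom-up dynamic program on trees. First, I would embed the input metric $(\mathcal{C}\cup\mathcal{F}, d)$ into a random hierarchically well-separated tree $T$ using a Fakcharoenphol--Rao--Talwar style embedding. Standard FRT yields $O(\log n)$ expected distortion; to tighten this to $O(\log k)$, I would first compute a constant-factor approximate unconstrained $k$-median solution and use its centers to obtain a coreset of size $\mathrm{poly}(k)$ whose fair $k$-median cost approximates the original instance's, so that FRT applied to the reduced metric gives the desired $O(\log k)$ expected distortion. Any solution on the tree then pulls back to a solution in the original metric whose expected cost is at most $O(\log k)\cdot\opt$.

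Second, on $T$ I would design a bottom-up dynamic program that exactly computes the optimal fair $k$-median cost. Rooting $T$ at an arbitrary node, I process nodes from the leaves upward. At each node $v$, the DP state records: (i) the number of centers opened in the subtree $T_v$, which is at most $k$; and (ii) for each of the $l$ color classes, the number of color-$i$ clients inside $T_v$ that are assigned to centers outside $T_v$, and the number of color-$i$ clients outside $T_v$ that are assigned to centers inside $T_v$. The length of the edge above $v$ contributes to the cost in proportion to the total number of crossing client--center pairs. Whenever a center $f$ inside $T_v$ becomes ``sealed off'' (no further clients from above can reach $f$), I verify that the resulting color composition of $f$'s cluster satisfies the $(\boldsymbol\alpha,\boldsymbol\beta)$-proportional constraints, discarding infeasible states.

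The main technical obstacle is enforcing the per-cluster fairness constraints within a decomposable tree DP without letting the state explode: feasibility of a cluster is determined by its final color profile, but during the upward pass the DP sees only partially-formed clusters. Naively tracking a full color profile per open cluster would yield state size $n^{\Theta(kl)}$. I would instead commit each cluster to its final profile at the moment its center's incident edge to its parent is finalized, so that each DP state at node $v$ can be summarized by an aggregate color-crossing vector in $\{0,1,\ldots,n\}^{l}$ together with a count of centers opened; this gives $n^{O(l)}$ states per node with polynomial-time transitions. Correctness follows by structural induction on $T$, and taking expectation over the random embedding yields the claimed $O(\log k)$-approximation with total running time $n^{O(l)}$.
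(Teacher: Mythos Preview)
Your proposal is essentially the paper's approach: collapse the metric to $k$ distinct locations using an approximate $k$-median (the paper simply moves every point of $\mathcal{C}\cup\mathcal{F}$ to its nearest approximate center, which sidesteps having to justify a fair-$k$-median coreset construction), apply FRT on those $k$ locations for $O(\log k)$ expected distortion, and solve exactly on the HST via a bottom-up DP whose state records the number of opened centers together with per-color in/out crossing counts, with fairness enforced by guessing each opened facility's full color profile at its leaf. The paper's DP processes a node's children one at a time (hence four crossing vectors rather than your two) and appends a min-cost max-flow step to convert the DP's aggregate per-facility color counts into an explicit client-to-center assignment---a step your outline omits but which is routine.
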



Our method solving the fair $k$-median problem in doubling metrics is based on the dynamic programming process on split-tree decomposition. Firstly, the split-tree decomposition is to partition the metrics into a number of subsets randomly, called \emph{blocks}, and keeps dividing these blocks recursively until each block only contains one point. For each block, we construct a subset of points, called portals, as the bridges to get the distance of points in split-tree. Our goal is to prove that there exists a near-optimal solution such that different blocks "interact" only through portals, which is one of the major challenges to solve the fair $k$-median problem in doubling metrics. The dynamic programming algorithm proceeds on the split-tree from the leaves to the root to fill in the DP table. For a table entry of a specific block, we save the information that how many clients of each type located outside (respectively inside) is assigned to facilities located inside (respectively outside) this block through each portal, and the number of facilities opened in this block. For a subproblem of a specific block, the value of the table entry of this block is calculated by considering the following two parts: the total value of table entries of the children of this block which can be extracted from the DP table; the cost between the portals of this block and the ones in its children.
How to calculate the cost between the portals of a block and the ones in its children is another challenge to solve the fair $k$-median problem in doubling metrics. To solve this problem, we construct an auxiliary graph and use minimum weighted perfect matching to get the cost between the portals of a block and the ones in its children. The dynamic programming procedure returns a set of facilities that need to be opened, and the number of clients assigned to each open facility satisfying the fairness constraints. Finally, we apply the min-cost max-flow method to assign the clients to the facilities.

The algorithm solving the fair $k$-median problem in general metric space is based on the embedding of given space to tree metrics such that a Hierarchically Separated Tree (HST) can be obtained. Moreover, a dynamic programming process is presented on the Hierarchically Separated Tree to get the candidate opened facilities.

\section{Preliminaries}\label{sec3}

\begin{definition}[metric space]\label{kmwp}
  Given a space $\mathcal {X}=(X, dist)$ where $X$ is a set of points associated with a distance function $dist:X\times X\rightarrow \Re_0$, space $\mathcal {X}$ is called a metric space if for any point $x, y, z$ in $X$, the following properties are satisfied: (1) $dist(x,y)=0$ iff $x=y$; (2) $dist(x,y)=dist(y,x)$; (3) $dist(x,y)+dist(y,z) \geq dist(x,z)$.
\end{definition}

\begin{definition}[doubling dimension]
Given a metric space $(X,dist)$ and a non-negative real number $r$, for any point $x\in X$, let $B(x, r)=\{y\in X\mid dist(x, y)\leq r\}$ denote the ball around $x$ with radius $r$. The doubling dimension of the metric space $(X,dist)$ is the smallest integer $d$ such that any ball $B(x, 2r)$ can be covered by at most $2^d$ ball $B(x, r)$. A metric space with doubling dimension is called a doubling metric space.
\end{definition}

Given a metric space $\mathcal {X}=(X,dist)$, let $\Delta$ be the ratio between the largest and the smallest distance in $\mathcal {X}$, which is called the \emph{aspect ratio} of $\mathcal {X}$. For any subset $Y\subseteq X$, the \emph{aspect ratio} of $Y$ is defined as $\frac{\max_{y,y'\in Y} d(y,y')}{\min_{y,y'\in Y} d(y,y')}$. For a positive integer $l$, let $[l]=\{1,2,...,l\}$.

\begin{definition}[the fair $k$-median problem]
Given a metric space $(X, dist)$ with a set of clients $\mathcal{C}$ and a set of facilities  $\mathcal{F}$, a non-negative integer $k$ and two vectors $\boldsymbol \alpha,\boldsymbol \beta\in \mathcal{R}^l$ $(0\leq\alpha_i\leq\beta_i\leq1)$, assume that $\mathcal{C}$ is partitioned into $l$ disjoint groups, i.e., $\mathcal{C}=\{\mathcal{C}_1,\mathcal{C}_2,\ldots,\mathcal{C}_l\}$ and $\mathcal{C}_i\cap \mathcal{C}_j = \emptyset$ $(i\neq j)$. For any $i\in [l]$, if $x\in \mathcal{C}_i$, $x$ is called an $i$-th type client. The fair $k$-median problem is to find a subset $F\subseteq \mathcal {F}$ of size at most $k$ and an assignment function $\mu : \mathcal {C}\rightarrow F$ such that:
(1) for any $f\in F$, $i\in[l]$, $\alpha_i\leq \frac{\mid\{c\in \mathcal{C}_i\mid \mu(c)=f\}\mid}{\mid\{c\in \mathcal{C}\mid \mu(c)=f\}\mid}\leq \beta_i$; (2) $\sum_{c\in \mathcal{C}} dist(c,\mu (c))$ is minimized.
\end{definition}


For an instance $(X, dist, \mathcal{C}, \mathcal{F}, k, l,\boldsymbol \alpha, \boldsymbol \beta)$ of the fair $k$-median problem, given a solution $(F,\mu)$ of instance $(X, dist, \mathcal{C}, \mathcal{F}, k,l, \boldsymbol \alpha, \boldsymbol \beta)$, the facilities in $F$ are called candidate facilities. For a facility $f\in F$ and  a client $c \in C$, if $\mu(c)=f$, then it is called that $f$ \emph{serves} client $c$. Let $X=\mathcal{C}\cup\mathcal{F}$, and assume that the size of $X$ is $n$.

\section{A QPTAS for Fair $k$-median Problem in Doubling Metrics}

For the fair $k$-median problem in doubling metrics, we first construct a \emph {split-tree}, and design a dynamic programming procedure to get the solution.

\subsection{Split-tree Decomposition} \label{Sec 3.1}

Given an instance $(X, dist, \mathcal{C}, \mathcal{F}, k,l, \boldsymbol \alpha, \boldsymbol \beta)$ of the fair $k$-median problem, following the assumptions in \cite{behsaz2019approximation}, the aspect-ratio $\Delta$ of the input metrics is at most $O(n^4/\epsilon)$ (where $n$ is the size of $\mathcal{C} \cup \mathcal{F}$, and $\epsilon$ is a constant). In order to satisfy the aspect-ratio assumption, we first deal with the points in $X$ by the following preprocessing steps and we use $OPT$ to denote an optimal solution of fair $k$-median problem.
Find an $O(1)$-approximation feasible solution $L$ of the fair $k$-median problem (\cite{Bercea0KKRS019}), and let $cost(L)$ be the cost of the feasible solution obtained. Then, if there exists a pair of points $x,y\in X$ with distance less than $\epsilon cost(L)/n^4 $, remove $x$, copy a point $x'$ of $x$, and add $x'$ at $y$ (note that if there are two clients at $y$, the two clients may not necessarily be assigned to the same facility in the solution). In the new instance obtained, a point $x$ is within distance at most $n \cdot \epsilon cost(L)/n^4$ from its original location. Thus, the cost of any solution for this new instance compared with the original instance is increased by at most $ n \cdot \epsilon cost(L)/n^3 \leq \epsilon cost(OPT)$.

For any set $Y$ of points, a subset $S\subseteq Y$ is called a $\rho $-$ covering$ of $Y$ if for any point $ y\in Y $, there is a point $s\in S$ such that $dist(y,s)\leq \rho$. A subset $S\subseteq Y$ is called a $\rho $-$ packing$ of $Y$ if for any $s,s' \in S$,\ $dist(s,s')\geq \rho$. A subset $S\subseteq Y$ is a $\rho $-$ net$ in $Y$ if it is both a $\rho $-$ covering$ of $Y$ and a $\rho$-$ packing$ of $Y$. The size of a net in metrics with fixed doubling dimension $d$ is bounded by the following lemma.

\begin{lemma}
\label{L4}
Let $(X,dist)$ be a metric space with doubling dimension $d$ and aspect-ratio $\Delta$, and let $ S\subseteq X $ be a $\rho $-$ net$. Then $\mid S\mid \leq (\frac {\Delta}{\rho})^ {O(d)}$.
\end{lemma}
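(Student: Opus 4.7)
The plan is to bound $|S|$ by covering all of $X$ with small balls, one per point of the $\rho$-net, using the doubling property iteratively to get a quantitative estimate of how many such small balls are needed.

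First, I would normalize the metric so that the minimum pairwise distance is $1$, and hence the maximum pairwise distance equals $\Delta$. Pick any anchor $x_0 \in X$; then $X \subseteq B(x_0, \Delta)$, so it suffices to bound the number of points of $S$ lying in this single ball. This reduction from the whole space to one explicit ball is what lets the doubling property be applied cleanly.

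Next, I would iterate the doubling property. By definition, every ball of radius $2r$ can be covered by at most $2^d$ balls of radius $r$; applying this $k$ times shows that $B(x_0, \Delta)$ can be covered by at most $2^{dk}$ balls of radius $\Delta/2^k$. Choosing the smallest $k$ with $\Delta/2^k \leq \rho/2$, i.e.\ $k = \lceil \log_2(2\Delta/\rho) \rceil$, the ball $B(x_0,\Delta)$ is covered by at most $2^{dk} \leq (4\Delta/\rho)^d$ balls of radius $\rho/2$.

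Finally, I would invoke the packing property of the net. Since $S$ is a $\rho$-packing, any two distinct points of $S$ are at distance at least $\rho$, so by the triangle inequality each ball of radius $\rho/2$ contains at most one point of $S$. Therefore $|S|$ is at most the number of covering balls, giving $|S| \leq (4\Delta/\rho)^d = (\Delta/\rho)^{O(d)}$. There is no real obstacle here; the only thing to watch is picking the covering radius to be $\rho/2$ rather than $\rho$, so that the packing condition forces at most one net-point per covering ball, and checking that the choice of $k$ yields the stated $(\Delta/\rho)^{O(d)}$ bound after absorbing the constant $4^d$ into the exponent.
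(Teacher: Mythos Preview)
The paper does not give a proof of this lemma; it is stated as a standard packing bound for doubling metrics and then used as a black box. Your argument is exactly the classical proof (iterate the doubling property to cover a diameter-$\Delta$ ball by $(O(\Delta/\rho))^{d}$ balls of radius $\rho/2$, then use the $\rho$-packing to put at most one net point in each), and it is correct. One minor nitpick: with \emph{closed} balls of radius exactly $\rho/2$, two net points at distance exactly $\rho$ can both lie on the boundary of the same covering ball, so to be safe take the covering radius strictly below $\rho/2$ (e.g.\ $\rho/3$); this only affects the hidden constant and is absorbed into the $O(d)$.
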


A \emph{decomposition} of the metric $(X,dist)$ is a partition of $X$ into subsets, which are called \emph{blocks}. A \emph{hierarchical decomposition} is a sequence of $\ell+1$ decompositions $H_0,H_1,...,H_\ell$ such that every block of $H_{i+1}$ is the union of blocks in $H_i$ and $H_{\ell}=\{X\}$ and $H_0=\{\{x\}\mid x\in X\}$. The blocks of $H\_i$ are called the \emph{level} $i$ blocks.
A \emph{split-tree} of the metric space is a complete hierarchical decomposition: the root node is the level $\ell$ block $H_{\ell}=\{X\}$ and the leaves are the singletons such that $H_0=\{\{x\}\mid x\in X\}$.
Given a metric space $(X,dist)$, let $\Delta=2^\ell$ be the diameter of the metrics. We first construct a sequence of sets $Y_{\ell-1}\subseteq \ldots \subseteq Y_2\subseteq Y_1\subseteq Y_0=X$ such that $Y_i$ is a $2^{i-2}$-net of $Y_{i-1}$.
Starting from $H_{\ell}=\{X\}$, we now discuss the relation between the blocks in ($i+1$)-th level and $i$-th level, where $0\leq i\leq \ell-1$. Let $r_i=2^i \varrho$ where $\frac{1}{2}\leq \varrho<1$, and let $\pi$ be a random ordering of the points in $X$. For each point $x\in X$, let $\pi(x)$ be the order of $x$ in $\pi$. For the set $Y_i$, let $\{y_1, \ldots, y_h\}$ be the set of points in $Y_i$ with order from left to right in $\pi$.
For the point $y_1$, a new block $B^{y_1}$ can be obtained at level $i$ such that $B^{y_1}=\{x\in X\mid dist(x,y_1)\leq r_i\}$.
For any point $y_j$ $(1< j\leq h)$, a new block $B^{y_j}$ can be obtained at level $i$ by the following way: let $Z=\{x\in X\mid dist(x,y_j)\leq r_i\}$.
For each point $x$ in $Z$, if $x$ is not contained in any blocks $\{B^{y_1}, \ldots, B^{y_{j-1}}\}$, then $x$ is contained in $B^{y_j}$.

\begin{lemma}[\cite{talwar2004bypassing}]
\label{splitpro}
Given a split-tree of metric space $(X,dist)$ with a sequence of decompositions $H_0,H_1,...,H_\ell$, the split-tree decomposition has the following properties:
\begin{enumerate}
\item The total number of levels $\ell$ is $O(\log n)$ (since $\Delta=O(n^4/\epsilon)$).
\item Each block of level $i$ has diameter at most $2^{i+1}$, namely the maximum distance between any pair of points in a block of level $i$ is at most $2^{i+1}$.
\item Each level $i$ block is the union of at most $2^{O(d)}$ level $i-1$ blocks.
\item For any pair of points $u,v \in X$, the probability that they are in different sets corresponding to blocks at level $i$ of split-tree is at most $O(d)\cdot \frac{dist(u,v)}{2^i}$.
\end{enumerate}
\end{lemma}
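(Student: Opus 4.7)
The four assertions decouple cleanly: parts (1)--(3) are structural consequences of the construction and of the net bound in Lemma~\ref{L4}, while part (4) is the probabilistic heart of the statement and is where I anticipate the main work. Part (1) follows immediately from the preprocessing: since $\Delta = O(n^4/\epsilon)$, we have $\ell = \lceil\log_2\Delta\rceil = O(\log n)$. Part (2) follows by the triangle inequality, because every level-$i$ block $B^{y_j}$ is contained in the ball $\{x : dist(x,y_j) \le r_i\}$ with $r_i = 2^i\varrho < 2^i$, giving diameter at most $2r_i < 2^{i+1}$. For part (3), the plan is to argue top-down inside a fixed level-$i$ block: the centers of its constituent level-$(i-1)$ subblocks lie in $Y_{i-1}$, which (as a $2^{i-3}$-net of $Y_{i-2}$) is $2^{i-3}$-separated; by part (2) these centers sit inside a ball of radius $2^{i+1}$, so Lemma~\ref{L4} bounds their number by $(2^{i+1}/2^{i-3})^{O(d)} = 2^{O(d)}$.

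For part (4), fix $u, v$ with $D := dist(u,v)$. The randomness has two independent sources: the scaling $\varrho$ uniform on $[1/2,1)$, which determines $r_i = 2^i\varrho$, and the ordering $\pi$ on $X$. I would call $y \in Y_i$ a \emph{settler} of $\{u,v\}$ if $y$ is the first in $\pi$ among points of $Y_i$ whose $r_i$-ball intersects $\{u,v\}$, and a \emph{cutter} if exactly one of $u,v$ lies in that ball; then $u,v$ are separated at level $i$ precisely when the settler is a cutter. The next step is to enumerate the points of $Y_i$ as $y_1, y_2, \ldots$ in increasing order of $\min(dist(y,u), dist(y,v))$ and bound two elementary probabilities: the probability over $\varrho$ that $y_j$ is a cutter is at most $D/2^{i-1} = 2D/2^i$, since $r_i$ is uniform on an interval of length $2^{i-1}$ and the cutter window has length at most $|dist(y_j,u)-dist(y_j,v)| \le D$; and the probability over $\pi$ that $y_j$ precedes $y_1,\dots,y_{j-1}$ in the ordering is $1/j$.

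The main obstacle is that naively summing $1/j$ over all $j$ produces an unwanted $\log n$ factor, whereas the lemma demands a clean $O(d)\cdot D/2^i$ bound. The plan to close this gap is a charging argument by annulus: only those $y_j$ whose min-distance to $\{u,v\}$ lies within $O(D)$ of $r_i$ can possibly cut, and such $y_j$ occupy a spherical shell of width $O(D)$ at radius $\Theta(2^i)$; the $2^{i-2}$-separation of $Y_i$ combined with Lemma~\ref{L4} caps the number of candidates in each such shell by $2^{O(d)}$, replacing the harmonic sum by a geometric one summing only $O(d)$ relevant terms. Executing this interplay between the ordering probability and the doubling-dimension packing bound cleanly, while keeping the $\varrho$- and $\pi$-randomness properly decoupled, is where I expect the proof to require the most care.
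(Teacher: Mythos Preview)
The paper does not prove this lemma at all; it is stated with a citation to \cite{talwar2004bypassing} and no argument is supplied. Your sketch is the standard route from that reference: parts (1)--(3) are structural consequences of the construction and the packing bound of Lemma~\ref{L4}, and part (4) is the ``settler/cutter'' union bound combined with a doubling-dimension packing count.

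One technical point worth tightening in part (4): the ``shell of width $O(D)$'' framing and the phrase ``geometric sum over $O(d)$ terms'' are slightly off. The clean way to suppress the harmonic blow-up is not to restrict attention to cutters in a thin annulus, but to note that any $y_j\in Y_i$ that can settle $\{u,v\}$ for \emph{some} $\varrho\in[1/2,1)$ must satisfy $\min(dist(y_j,u),dist(y_j,v))<2^i$, hence lies inside a ball of radius $2^i$ around $\{u,v\}$. By the $2^{i-2}$-packing of $Y_i$ and Lemma~\ref{L4} there are at most $M=2^{O(d)}$ such points, and these are precisely the first $M$ in your distance ordering; thus the sum is still harmonic, $\sum_{j=1}^{M}1/j=O(\log M)=O(d)$, which is where the $O(d)$ factor in the statement comes from. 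With that adjustment your outline matches Talwar's original argument.
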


Given an instance $(X, dist, \mathcal{C}, \mathcal{F}, k,l, \boldsymbol \alpha, \boldsymbol \beta)$ of the fair $k$-median problem, let $T$ be the split-tree obtained by using the methods in \cite{talwar2004bypassing}. For each block $B$ of level $i$ in $T$, we compute a $\rho2^{i+1}$-net $P$ of block $B$. Each point in $P$ is called a portal, and $P$ is also called a portal set. By Lemma \ref{L4}, it follows that the number of portals at a given block is $\rho^{-O(d)}$.
Moreover, the split-tree within portal set can be found in time $(1/\rho)^{O(d)}n\log \Delta$ (\cite{bartal2013linear,cohen2019near}). For a portal set $P$ of a block $B$ of level $i$ in $T$, there is an important property as follows.
\begin{lemma}[\cite{bartal2016traveling}]
\label{lportal}
For a portal set $P$ of a block $B$ of level $i$ in $T$, assume that the children of $B$ at level $i-1$ is $B_1,B_2,\ldots,B_u$ and each block $B_j$ has a portal set $P_j\subseteq B_j$. $P$ is a subset of the portal sets computed for the descendant blocks of $B$. That is, $P\subseteq P_1\cup P_2\cup \ldots\ P_u$.
\end{lemma}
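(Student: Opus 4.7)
The plan is to construct the portal sets recursively from the leaves of $T$ upward so that the inclusion $P \subseteq \bigcup_j P_j$ is built into the definition rather than deduced after the fact. At the leaves (level $0$), each block is a singleton $\{x\}$ and its portal set is $\{x\}$. Inductively, given portal sets $P_j$ for the children $B_1,\dots,B_u$ of a block $B$ at level $i$, I would form $Q = \bigcup_{j=1}^{u} P_j$, fix an arbitrary ordering of $Q$ (say the order induced by the random permutation $\pi$ used to build the split-tree), and carry out a greedy net-extraction: walk through $Q$ in order and add the current point $q$ to $P$ iff $dist(q,p) \geq \rho 2^{i+1}$ for every $p$ already in $P$. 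By construction, $P \subseteq Q = \bigcup_j P_j$, which is exactly the conclusion of the lemma.

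I would then verify that the resulting $P$ is a valid portal set of $B$, i.e.\ a $\rho 2^{i+1}$-net of $B$, so that the construction is consistent with the portal definition used in Section~\ref{Sec 3.1}. The packing property is immediate from the greedy acceptance rule. For the covering property, take any $x \in B$; then $x$ lies in some child block $B_j$, and since $P_j$ is a $\rho 2^{i}$-cover of $B_j$ by the inductive hypothesis, there is $q \in P_j \subseteq Q$ with $dist(x,q) \leq \rho 2^{i}$. Either $q$ itself was selected into $P$, or it was rejected because some previously-chosen $p \in P$ satisfies $dist(q,p) < \rho 2^{i+1}$; in either case the triangle inequality produces a point of $P$ within the required covering radius of $x$.

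The main obstacle is constant-juggling: the greedy thinning step inflates the covering radius by an additive $\rho 2^i$, so strictly speaking the cover one gets is of radius $(3/2)\rho 2^{i+1}$ rather than $\rho 2^{i+1}$. This is standard in the Bartal--Gottlieb--Krauthgamer and Talwar constructions cited above, and is resolved by treating $\rho$ as a tunable absolute-constant parameter: rescaling $\rho$ by a small absolute factor absorbs the constant inflation, preserves the definition of a net and the portal-count bound $\rho^{-O(d)}$ from Lemma~\ref{L4}, and leaves the literal inclusion $P \subseteq \bigcup_j P_j$ intact because that inclusion depends only on the greedy selection rule, not on the precise value of $\rho$.
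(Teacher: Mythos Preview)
The paper does not prove this lemma at all: it is stated with a citation to \cite{bartal2016traveling} and used as a black box, so there is no in-paper argument to compare against.

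Your bottom-up greedy construction is essentially the standard way this is handled in the cited literature (hierarchical nets built by successive thinning), and the inclusion $P\subseteq\bigcup_j P_j$ is indeed immediate from the selection rule. One point worth tightening: the covering-radius inflation is not a one-off $3/2$ factor but accumulates across levels, since the inductive hypothesis you actually have at level $i-1$ is a $c\rho 2^{i}$-cover for some $c>1$, not a $\rho 2^{i}$-cover. Summing the geometric series $r_i \le r_{i-1}+\rho 2^{i+1}$ from $r_0=0$ gives $r_i<2\rho 2^{i+1}$, so the inflation is bounded by an absolute constant and your ``rescale $\rho$'' fix is legitimate; you should just say explicitly why the recursion does not blow up rather than treating it as a single-step issue. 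With that clarification the argument is complete and matches the construction in the cited source.
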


Given a split-tree $T$ and two blocks $B_1$ and $B_2$ at level $i$ of $T$, and for two points $u\in B_1$, and $v\in B_2$,  the distance between $u$ and $v$ on the split-tree is defined as the length of the path which is constructed by the subpath from $u$ to a portal $p_i$ of $B_1$, the subpath from $p_i$ to a portal $p_j$ of $B_2$, and the subpath from $p_i$ to $v$. Given a block $B$ at level $i$ of split-tree $T$ and a portal set $P\subseteq B$, if there is a client $c_1$ outside of $B$ that is assigned to a facility $f_1$ inside of $B$ crossing at a portal $p_1\in P$, then we say that client $c_1$ enters $B$ through portal $p_1$. Similarly, if there is a client $c_2$ inside of $B$ that is assigned to a facility $f_2$ outside of $B$ crossing at a portal $p_2\in P$, then we say that client $c_2$ leaves $B$ through portal $p_2$. In the following, all the distances considered are the distances on the split-tree $T$.

\begin{lemma}
\label{5}
For any metric $(X,dist)$ with doubling dimension $d$ and any $\rho >0$, given a randomized split-tree $T$, a pair of blocks $B_1$ and $B_2$ of level $i$, a $\rho2^{i+1}$-net $P_1$ of block $B_1$, a $\rho2^{i+1}$-net $P_2$ of block $B_2$, and any two points $u\in B_1$ and $v\in B_2$, for the distance of $u$ and $v$ on the split-tree $T$ and the distance $dist(u, v)$ in the metric space, we have:
$\min_{p_i\in P_1,p_j\in P_2} \{dist(u,p_i)+dist(p_i,p_j)+dist(p_j,v)\} \leq dist(u,v)+O(\rho 2^{i+1})$.

\begin{proof}

By the triangle inequality, we have
\begin{equation*}
\begin{aligned}
dist(p_i,p_j)\leq dist(u,p_i)+dist(u,p_j)
             \leq dist(u,p_i)+dist(u,v)+dist(v,p_j).
\end{aligned}
\end{equation*}

\begin{equation*}
\begin{aligned}
dist(u,p_i)+dist(p_i,p_j)+dist(v,p_j)\leq 2dist(u,p_i)+dist(u,v)+2dist(v,p_j).
\end{aligned}
\end{equation*}
Then, we have
\begin{equation*}
\begin{aligned}
&\min_{p_i\in P_1,p_j\in P_2} \{dist(u,p_i)+dist(p_i,p_j)+dist(p_j,v)\}\\ &\leq\min_{p_i\in P_1,p_j\in P_2} \{2dist(u,p_i)+dist(u,v)+2dist(v,p_j)\}\\
             &=\min_{p_i\in P_1,p_j\in P_2} \{2dist(u,p_i)+2dist(v,p_j)\}\!+\!dist(u,v)\\
             &=\min_{p_i\in P_1,p_j\in P_2}2dist(u,p_i)\!+\!\min_{p_i\in P_1,p_j\in P_2}2dist(v,p_j) +dist(u,v).
\end{aligned}
\end{equation*}
Since $P_1$ is a $\rho2^{i+1}$-net of block $B_1$ and $P_2$ is a $\rho2^{i+1}$-net of block $B_2$, we have that $\min\limits_{p_i\in P_1,p_j\in P_2} 2 dist(u,p_i) \leq 2 dist(u,p_i')\leq 2\cdot\rho2^{i+1}$, where $p_i'$ is any point in $P_1$. Similarly, we have that $\min\limits_{p_i\in P_1,p_j\in P_2} 2 dist(u,p_i) \leq 2\cdot\rho2^{i+1}$. Thus, we have $$\min_{p_i\in P_1,p_j\in P_2} \{dist(u,p_i)+dist(p_i,p_j)+dist(p_j,v)\} \leq dist(u,v)+O(\rho 2^{i+1})$$
\end{proof}

\end{lemma}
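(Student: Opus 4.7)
The plan is to bound the three-segment path distance by combining the triangle inequality with the covering property of the nets $P_1, P_2$. The split-tree distance between $u \in B_1$ and $v \in B_2$ is by definition the minimum over portal pairs $(p_i, p_j) \in P_1 \times P_2$ of $dist(u, p_i) + dist(p_i, p_j) + dist(p_j, v)$, so the task reduces to exhibiting a favorable portal pair and upper-bounding the resulting sum.

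First I would inject $dist(u,v)$ into the middle segment via two applications of the triangle inequality, obtaining $dist(p_i, p_j) \leq dist(p_i, u) + dist(u, v) + dist(v, p_j)$ for any fixed choice of $p_i \in P_1$ and $p_j \in P_2$. Adding $dist(u, p_i) + dist(v, p_j)$ to both sides then gives
\[ dist(u, p_i) + dist(p_i, p_j) + dist(p_j, v) \leq dist(u, v) + 2\, dist(u, p_i) + 2\, dist(v, p_j), \]
in which the two correction terms depend on $p_i$ and $p_j$ separately. This separability is the key structural observation that makes the rest of the argument routine.

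Taking the minimum of both sides over $(p_i, p_j) \in P_1 \times P_2$ decouples the right-hand side into $dist(u,v) + 2\min_{p_i \in P_1} dist(u, p_i) + 2\min_{p_j \in P_2} dist(v, p_j)$. I would then invoke the covering property of nets: since $P_1$ is a $\rho 2^{i+1}$-net of $B_1$ and $u \in B_1$, there exists some $p_i^* \in P_1$ with $dist(u, p_i^*) \leq \rho 2^{i+1}$, and symmetrically for $v \in B_2$, so each minimum is bounded by $\rho 2^{i+1}$. Combining gives the claimed additive slack of $O(\rho 2^{i+1})$.

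I do not anticipate any genuine obstacle, since the argument is a textbook combination of the triangle inequality with net covering. The only point to handle carefully is the correct reading of "distance on the split-tree" as defined just before the lemma, namely the length of the path through optimally chosen portals in $P_1$ and $P_2$; once this identification is made, the left-hand side of the stated inequality is precisely the minimum that the manipulation above controls.
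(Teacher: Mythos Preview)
Your proposal is correct and follows essentially the same approach as the paper: apply the triangle inequality twice to bound $dist(p_i,p_j)$ by $dist(u,p_i)+dist(u,v)+dist(v,p_j)$, add the two endpoint-to-portal terms to obtain $dist(u,v)+2\,dist(u,p_i)+2\,dist(v,p_j)$, then minimize and invoke the $\rho 2^{i+1}$-covering property of each net. Your explicit remark on separability of the two correction terms is slightly more careful than the paper's presentation, but the argument is identical.
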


\subsection{Algorithm for Fair $k$-median Problem in Doubling Metrics}\label{S 3.2}

In this section, we show how to design the dynamic programming process based on the split-tree.

\subsubsection{Dynamic Programming State}
Given a split-tree $T$, the dynamic programming algorithm proceeds on split-tree $T$ from the leaves to the root. For any block $B$ of split-tree $T$, we define the subtree rooted at block $B$ as $T_B$, and each subtree is a subproblem in our dynamic programming process, which includes the information that how many clients of each type entering or leaving the subtree through the portals of the subtree.

A table entry in the dynamic programming process is a tuple $DP[B,k_B,\boldsymbol Q_1^E,\boldsymbol Q_1^L,\boldsymbol Q_2^E, \boldsymbol Q_2^L,\ldots,\boldsymbol Q_m^E,\boldsymbol Q_m^L]$ where the parameters are defined as follows:
\begin{enumerate}
\item $m$ is the size of portal set in $B$ and $m=\rho^{-O(d)}$.
\item $B$ is the root node of the subtree $T_B$.
\item $k_B$($0\leq k_B\leq k$) is the number of open facilities in $B$.
\item $\boldsymbol Q_i^E$ ($i\in [m]$) is a vector with $l$ values and for $t\in[l]$, $q_t^{E(i)}$($0\leq q_t^{E(i)}\leq n$) in $\boldsymbol Q_i^E$ denotes the number of type $t$ clients that enter $B$ through the $i$-th portal.
\item $\boldsymbol Q_i^L$ ($i\in [m]$) is a vector with $l$ values and for $t\in[l]$, $q_t^{L(i)}$($-n\leq q_t^{L(i)}\leq 0$) in $\boldsymbol Q_i^L$ denotes the number of type $t$ clients that leave $B$ through the $i$-th portal.
\end{enumerate}

The cost of table entry $DP[B,k_B,\boldsymbol Q_1^E,\boldsymbol Q_1^L,\boldsymbol Q_2^E,\boldsymbol Q_2^L,\ldots,\boldsymbol Q_m^E,\boldsymbol Q_m^L]$ consists of the following three parts: (1) The cost of assigning clients inside of $B$ to facilities inside of $B$; (2) For the clients inside of $B$ assigned to facilities outside of $B$ through portals of $B$, the cost from the clients inside of $B$ to portals of $B$; (3) For the clients outside of $B$ assigned to facilities inside of $B$ through portals of $B$, the cost from portals of $B$ to the assigned facilities inside of $B$. 

The solutions at the root $R$ of $T$ are in the table entry $DP[R,k,\boldsymbol 0,\boldsymbol 0,\ldots,\boldsymbol 0]$, where $\boldsymbol 0$ is a vector with $l$ zero components. Among all these solutions in $DP[R,k,\boldsymbol 0,\boldsymbol 0,\ldots,\boldsymbol 0]$, the algorithm outputs the one with the minimum cost.

The base case of the dynamic programming is located at the leaves (which are singletons) of split-tree. For a leaf of split-tree, the corresponding block has only one portal and we only need to set $\boldsymbol Q_1^E$ and $\boldsymbol Q_1^L$, while the remaining $\boldsymbol Q_i^E$'s and $\boldsymbol Q_i^L$'s can be assigned $\boldsymbol 0$. Since there is only one facility or client in each leaf of split-tree, we consider the following three cases for the table entry of leaf node $B$:
\begin{enumerate}
\item If there is only one facility in the block and this facility is opened, then the table value of this block is $DP[B,1,\boldsymbol Q_1^E,\boldsymbol 0,\boldsymbol 0,\boldsymbol 0,\ldots,\boldsymbol 0,\boldsymbol 0]=0$.
    In addition, the $t$-th value of $\boldsymbol Q_1^E$ is denoted as $q_t$, i.e., $q_t$ is the number of type $t$ clients outside of $B$ that are served by the facility inside $B$. Thus, in order to satisfy the fair constraints, the number of each type clients should satisfy the following constraint:
    for each $t\in [l]$, $\alpha_t\leq \frac{q_t}{\sum_{t=1}^{l} q_t} \leq \beta_t$.
\item If there is only one facility in the block but this facility is not opened, then the table value of this block is $DP[B,0,\boldsymbol 0,\boldsymbol 0,\boldsymbol 0,\boldsymbol 0,\ldots,\boldsymbol 0,\boldsymbol 0]=0$.
\item If there is only one client in the block, then the table value of this block is $DP[B,0,\boldsymbol 0,\boldsymbol Q_1^L,\boldsymbol 0,\boldsymbol 0,\ldots,\boldsymbol 0,\boldsymbol 0]=0$. If the client is a type $t$ ($t\in [l]$) client, then the $t$-th value in $\boldsymbol Q_1^L$ is -1 and the other values in $\boldsymbol Q_1^L$ is 0.
\end{enumerate}

\subsubsection{Computing Table Entries}\label{Sec 3.2.2.}
Now we consider the subproblem on the subtree $T_B$ where block $B$ is at level $i$ of split-tree $T$. For the block $B$, let $B_1,B_2,...,B_u$ be the children of block $B$ where $u$ is at most $2^{O(d)}$. Let $(B,k_B,\boldsymbol Q_1^E,\boldsymbol Q_1^L,\ldots,\boldsymbol Q_m^E,\boldsymbol Q_m^L)$ be the configuration of $B$, and let $(B_j,k_{B_j},\boldsymbol Q_1^{E(j)},\boldsymbol Q_1^{L(j)},\ldots,\boldsymbol Q_m^{E(j)},\boldsymbol Q_m^{L(j)})$ be the configuration of children $B_j$ ($1\leq j\leq u$), where $B_j$ ($j\in [u]$) denotes the $j$-th child of $B$, $k_{B_j}$ is the number of facilities that are opened in $B_j$, $\boldsymbol Q_i^{E(j)}$ ($i\in [m]$) is a vector with $l$ values and each value $q_t^{E(ij)} $ ($0\leq q_t^{E(ij)}\leq n,t\in [l]$) in $\boldsymbol Q_i^{E(j)}$ denotes the number of the $t$-th type clients entering $B_j$ through the $i$-th portal of $B_j$, $\boldsymbol Q_i^{L(j)}$ is a vector with $l$ values and each value $q_t^{L(ij)} $ ($0\leq q_t^{L(ij)}\leq n,t\in [l]$) in $\boldsymbol Q_i^{L(j)}$ denotes the number of the $t$-th type clients leaving $B_j$ through the $i$-th portal of $B_j$.
The values of $k_{B_j}$, $\boldsymbol Q_i^{E(j)}$ and $\boldsymbol Q_i^{L(j)}$ have the following constraints:
\begin{enumerate}
  \item[(1)] $k_{B_1}+k_{B_2}+\ldots+k_{B_u} \leq k_B$;
  \item[(2)] $\sum_{i=1}^{m} {\sum_{j=1}^{u} \boldsymbol Q_i^{E(j)}}+ \sum_{i=1}^{m} {\sum_{j=1}^{u} \boldsymbol Q_i^{L(j)}}$=$\sum_{i=1}^{m} \boldsymbol Q_i^E  + \sum_{i=1}^{m} \boldsymbol Q_i^L$.
\end{enumerate}

\begin{lemma}
Given a subproblem $B$ at level $i$ of split-tree $T$ ($i\neq 0$) and its children $B_1,B_2,\ldots,B_u$ where $u$ is at most $2^{O(d)}$, for the configuration $(B,k_B,\boldsymbol Q_1^E,\boldsymbol Q_1^L,\ldots,\boldsymbol Q_m^E,\boldsymbol Q_m^L)$ of $B$, and the configuration $(B_j,k_{B_j},\boldsymbol Q_1^{E(j)},\boldsymbol Q_1^{L(j)},\ldots,\boldsymbol Q_m^{E(j)},\boldsymbol Q_m^{L(j)})$ of each children $B_j$ ($1\leq j\leq u$), we have
\begin{equation*}
\begin{aligned}
&DP[B,k_B,\boldsymbol Q_1^E,\boldsymbol Q_1^L,\ldots,\boldsymbol Q_m^E,\boldsymbol Q_m^L]
 =\min_{}
\{\sum_{j=1}^{u} DP[B_j,k_{B_j},\boldsymbol Q_1^{E(j)},\boldsymbol Q_1^{L(j)},\ldots,\boldsymbol Q_m^{E(j)},\boldsymbol Q_m^{L(j)}]
+\tau,
\\& with  \ k_{B_1}+k_{B_2}+\ldots+k_{B_u} \leq k_B, \sum_{i=1}^{m} \sum_{j=1}^{u} \boldsymbol Q_i^{E(j)}+ \sum_{i=1}^{m} {\sum_{j=1}^{u} \boldsymbol Q_i^{L(j)}}=\sum_{i=1}^{m} \boldsymbol Q_i^E  + \sum_{i=1}^{m} \boldsymbol Q_i^L \}
\end{aligned}
\end{equation*}
where $\tau$ is 
the cost between portals of $B_1,B_2,\ldots,B_u$ and the ones in $B$, and can be calculated in polynomial time.
\end{lemma}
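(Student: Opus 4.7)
The plan is to prove the recurrence in two stages. First I will establish the optimal-substructure property: every feasible configuration for $T_B$ decomposes canonically into feasible configurations for the $T_{B_j}$'s plus an inter-child routing of clients between portals. Second I will show that, once the children's entries and $B$'s entry are fixed, the routing cost $\tau$ can be computed in polynomial time via a min-cost flow on a small auxiliary graph over the portals.

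For the first stage, fix a solution realizing the left-hand side of the recurrence. Inside each child block $B_j$, restricting attention to the clients and facilities in $T_{B_j}$ together with the portal-traffic induced by the solution gives a feasible configuration for $T_{B_j}$ with some parameters $(k_{B_j}, \boldsymbol Q_1^{E(j)}, \boldsymbol Q_1^{L(j)}, \ldots, \boldsymbol Q_m^{E(j)}, \boldsymbol Q_m^{L(j)})$. Summing the facility counts gives $\sum_j k_{B_j} \le k_B$ directly (since every open facility in $B$ lies in exactly one child). The aggregate flow identity in constraint (2) is just a conservation law: every unit of "crossing-traffic" recorded at a child portal is either a client whose endpoint (facility or home) lies in a different child (contributing to both a child's $\boldsymbol Q^{L}$ and another child's $\boldsymbol Q^{E}$), or it is traffic that ultimately crosses $B$'s boundary, which the portal-inclusion property of Lemma \ref{lportal} lets us identify with one of $B$'s own portals. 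Conversely, any feasible tuple of children configurations satisfying (1) and (2) can be stitched together with a valid inter-child routing to form a solution for $T_B$ with the prescribed parameters. This establishes that the optimum on the left equals the infimum on the right once we define $\tau$ as the optimal routing cost.

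For the second stage I define the auxiliary graph as follows. Create a node for each portal of each child $B_j$ and for each portal of $B$ (the latter are identified, via Lemma \ref{lportal}, with nodes already present among the children's portals). For each type $t \in [l]$, construct a flow instance: each child portal $p$ of $B_j$ is a source of $q_t^{L(\cdot j)}$ units (clients of type $t$ leaving $B_j$ through $p$) and a sink of $q_t^{E(\cdot j)}$ units; each portal of $B$ carries $q_t^E$ units of supply and $q_t^L$ units of demand, representing the interface to the world outside $B$. Edges run between every pair of child portals with weight equal to the split-tree distance between them, and an optimal min-cost flow of each type yields the contribution of type $t$ to $\tau$; summing over $t$ gives $\tau$. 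The number of nodes is $u\cdot m = 2^{O(d)}\rho^{-O(d)}$ and the flow values are bounded by $n$, so a standard min-cost flow algorithm runs in time polynomial in $n$.

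The main obstacle I anticipate is verifying that the decomposition above is \emph{cost-preserving}: the total cost in a solution for $T_B$ splits cleanly into (i) the costs attributed to each $T_{B_j}$ by the DP, and (ii) the inter-portal routing at level $i$, with no double counting. This requires being careful that a client-facility path which crosses several levels is charged only at the highest level where its two endpoints lie in distinct children, using the portal-inclusion property to re-express $B$'s portals in terms of the children's portals. Once that bookkeeping is done, the matching of (i) to $\sum_j DP[B_j, \ldots]$ and (ii) to $\tau$ is straightforward, and the minimum over all valid $(k_{B_j}, \boldsymbol Q^{E(j)}_i, \boldsymbol Q^{L(j)}_i)$ satisfying (1) and (2) equals $DP[B, k_B, \boldsymbol Q^E_1, \boldsymbol Q^L_1, \ldots, \boldsymbol Q^E_m, \boldsymbol Q^L_m]$, completing the proof.
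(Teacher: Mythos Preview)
Your proposal is correct and follows the same high-level strategy as the paper: decompose the optimum for $T_B$ into optima for the children plus an inter-portal routing cost $\tau$, then compute $\tau$ type-by-type on an auxiliary graph over the portals. The difference is in how $\tau$ is computed. The paper builds, for each type $t$, a bipartite graph $\Phi_t$ with one \emph{vertex per unit of client traffic} (so $|R|=|S|$ can be as large as $nm(u+1)$), proves via a case analysis on augmenting paths that $\Phi_t$ always admits a perfect matching, and then argues separately that the minimum-weight perfect matching value equals the routing cost. Your min-cost flow formulation with aggregated supplies/demands at the portal nodes is the natural ``un-unrolled'' version of this: feasibility follows immediately from the conservation identity (your constraint (2)) rather than from a matching-existence claim, and optimality of the flow is the routing cost by definition. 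Both are polynomial; your variant is slightly cleaner and avoids the perfect-matching existence argument, while the paper's variant makes the combinatorial structure (which portals are paired with which) more explicit. The cost-decomposition bookkeeping you flag as the main obstacle is exactly what the paper leaves largely implicit, so your care there is warranted and does not indicate a gap.
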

\begin{proof}
Since there exists consistence relation between the configurations of $B$ and its children, that is,
the number of clients entering or leaving in the configuration $(B,k_B,\boldsymbol Q_1^E,\boldsymbol Q_1^L,\ldots,\boldsymbol Q_m^E,\boldsymbol Q_m^L)$ is equal to the number of clients entering or leaving in the configurations $(B_j,k_{B_j},\boldsymbol Q_1^{E(j)},\boldsymbol Q_1^{L(j)},\ldots,\boldsymbol Q_m^{E(j)},\boldsymbol Q_m^{L(j)})$, we have
\begin{equation*}
\begin{aligned}
\sum_{i=1}^{m} {\sum_{j=1}^{u} \boldsymbol Q_i^{E(j)}}+ \sum_{i=1}^{m} {\sum_{j=1}^{u} \boldsymbol Q_i^{L(j)}}=\sum_{i=1}^{m} \boldsymbol Q_i^E  + \sum_{i=1}^{m} \boldsymbol Q_i^L
\end{aligned}
\end{equation*}

For a particular type $t$ ($t\in [l]$), the number of type $t$ clients entering or leaving in the configuration $(B,k_B,\boldsymbol Q_1^E,\boldsymbol Q_1^L,\boldsymbol Q_2^E,\boldsymbol Q_2^L,\ldots,\boldsymbol Q_m^E,\boldsymbol Q_m^L)$ is equal to the number of type $t$ clients entering or leaving in the configurations $(B_j,k_{B_j},\boldsymbol Q_1^{E(j)},\boldsymbol Q_1^{L(j)},\boldsymbol Q_2^{E(j)},\boldsymbol Q_2^{L(j)},\ldots,\boldsymbol Q_m^{E(j)},\boldsymbol Q_m^{L(j)})$. Thus, we have
\begin{equation*}
\begin{aligned}
\sum_{i=1}^{m} {\sum_{j=1}^{u} q_t^{E(ij)}}+ \sum_{i=1}^{m} {\sum_{j=1}^{u}  q_t^{L(ij)}}=\sum_{i=1}^{m}  q_t^{E(i)}  + \sum_{i=1}^{m} q_t^{L(i)}
\end{aligned}
\end{equation*}

Since $-n\leq q_t^{L(ij)}\leq 0$, $-n\leq q_t^{L(i)}\leq 0$, we have:
\begin{equation}
\label{E35}
\begin{aligned}
\sum_{i=1}^{m} {\sum_{j=1}^{u} q_t^{E(ij)}}- \sum_{i=1}^{m} {\sum_{j=1}^{u}  \mid q_t^{L(ij)}\mid}=\sum_{i=1}^{m}  q_t^{E(i)}  - \sum_{i=1}^{m} \mid q_t^{L(i)}\mid
\end{aligned}
\end{equation}

Then, we have:
\begin{equation}
\label{E40}
\begin{aligned}
\sum_{i=1}^{m} {\sum_{j=1}^{u} q_t^{E(ij)}}+ \sum_{i=1}^{m}  \mid q_t^{L(i)}\mid=\sum_{i=1}^{m} {\sum_{i=1}^{m} q_t^{E(i)}+ \sum_{j=1}^{u}  \mid q_t^{L(ij)}\mid}
\end{aligned}
\end{equation}

Based on the constraints of $k_{B_j}$, $\boldsymbol Q_i^{E(j)}$, $\boldsymbol Q_i^{L(j)}$, to find the minimum cost of the solutions in subproblem $(B,k_B,\boldsymbol Q_1^E,\boldsymbol Q_1^L,\ldots,\boldsymbol Q_m^E, \boldsymbol Q_m^L)$ with table entry $DP[B,k_B,\boldsymbol Q_1^E,\boldsymbol Q_1^L,\ldots,\boldsymbol Q_m^E,\boldsymbol Q_m^L]$, we need to enumerate all consistent subproblems from its children $B_1,B_2,\ldots,B_u$. Moreover, the cost $DP[B,k_B,\boldsymbol Q_1^E,\boldsymbol Q_1^L,\ldots,\boldsymbol Q_m^E,\boldsymbol Q_m^L]$ is also closely related to the cost between portals of $B_1,B_2,\ldots,B_u$ and the ones of $B$, denoted by $\tau$.

We now give the general idea to calculate $\tau$. For each type $t\in [l]$, a weighted bipartite graph $\Phi_t$ is constructed. The value of $\tau$ is obtained based on perfect matching on each bipartite graph constructed.
If we find the minimum cost perfect matching in $\Phi_t$, it gives the optimal case for the type $t$ clients entering or leaving $B_1,\ldots,B_u$ and $B$ through portals of $B_1,\ldots,B_u$ and the ones in $B$.

For each type $t\in [l]$, a weighted bipartite graph $\Phi_t=(R\cup S, E)$ can be constructed by the following way. Let $R=X_E^F\cup X_L^S$, and $S=X_E^S\cup X_L^F$.
For the block $B$, and for each portal $u$ in $B$, if there exist $q$ clients entering $B$ through $u$, then add $q$ new vertices to $X_E^F$. It is easy to see that $\mid X_E^F\mid=\sum_{i=1}^{m} q_t^{E(i)}$.
Similarly, for each portal $u$ in $B$, if there exist $q$ clients leaving $B$ through $u$, then add $q$ new vertices to $X_L^F$. Then, we have $\mid X_L^F\mid=\sum_{i=1}^{m} \mid q_t^{L(i)}\mid$.
For each child $B_j$ of $B$, and for each portal $u$ of $B_j$, if there exist $q$ clients entering $B_j$ through $u$, then add $q$ new vertices to $X_E^S$. Then, we have $\mid X_E^S\mid=\sum_{i=1}^{m} {\sum_{j=1}^{u} q_t^{E(ij)}}$.
Similarly, for each child $B_j$ of $B$, and for each portal $u$ of $B_j$, if there exist $q$ clients leaving $B_j$ through $u$, then add $q$ new vertices to $X_L^S$. Then, we have $\mid X_L^S\mid=\sum_{i=1}^{m} {\sum_{j=1}^{u} \mid q_t^{L(ij)}\mid}$.

For each vertex $v$ in $R\cup S$, let $P(v)$ be the portal in $B$ or $B_j$ such that vertex $v$ is added by the clients entering or leaving block through the portal.

The set $E$ of edges in $\Phi_t$ can be constructed by the following methods.
\begin{enumerate}
\item Add all the edges in $\{(u, v)\mid u \in X_E^F, v\in X_E^S\}$ to $E$.
\item Add all the edges in $\{(u,v)\mid u\in X_L^S,v\in X_L^F\}$ to $E$.
\item Add all edges in $\{(u,v)\mid u\in X_L^S,v\in X_E^S$, where $P(u)\in B_i$ and $P(v)\in B_j$ and $i\neq j\}$ to $E$.
\item For each edge $(u, v)$ in $E$, the weight of edge $(u, v)$ is denoted as $dist(P(u), P(v))$.
\end{enumerate}




\begin{claim}
Given a bipartite graph $\Phi_t=(R\cup S,E)$ as described above, there must exist a perfect matching in $\Phi_t=(R\cup S,E)$.
\end{claim}
\begin{proof}
Assume that there does not exist a perfect matching in $\Phi_t=(R\cup S,E)$. Let $M$ be one of the maximum matchings in $\Phi_t$. Since $M$ is not a perfect matching, there exists at least two unmatched vertices in $\Phi_t$. By equation \ref{E35}, we have $\mid R\mid=\mid S\mid$. Thus, the number of unmatched vertices must be even. Assume that $a, b$ are two unmatched vertices in $\Phi_t$, we have the following cases for the unmatched vertices $a$ and $b$:
Case 1,  $a\in X_E^F, b\in X_E^S$ or $a\in X_L^S, b\in X_L^F$. According to the construction process of $E$, there must exist an edge between $a$ and $b$. Thus, based on $M$, a matching $M'$  with one more edge can be obtained, contradicting the fact that $M$ is a maximum matching in $\Phi_t$.

Case 2, $a\in X_E^F, b\in X_L^F$. We have the following two subcases: 1) There is no edge between vertices in $X_L^S$ and vertices in $X_E^S$. Since $\mid X_E^F\mid=\mid X_E^S\mid, \mid X_L^S\mid=\mid X_L^F\mid$, there must exist two unmatched vertices $a'\in X_E^S$ and $b'\in X_L^S$ such that an augmenting path from $a$ to $a'$ and an augmenting path from $b$ to $b'$ can be found. Thus, based on $M$, a matching $M'$ with larger size can be obtained, contradicting the fact that $M$ is a maximum matching in $\Phi_t$.
2) There exist two vertices $ c\in X_L^S, d\in X_E^S$ such that edge $(c,d)$ is in $E$. According to the construction process of $E$, edges $(a,d)$, $(c,b)$ are contained in $E$. Thus, an augmenting path from $a$ to $b$ can be found, and a matching $M'$ with larger size can be obtained, which is a contradiction.

Case 3, $a\in X_E^S,b\in X_L^S$. We consider the following four cases. 1) There is no edge from $a$ to the vertices in $X_L^S$, and there is no edge from $b$ to the vertices in $X_E^S$. Since $\mid X_E^F\mid=\mid X_E^S\mid,\mid X_L^S\mid=\mid X_L^F\mid$, there must exist two unmatched vertices $a'\in X_E^F$ and $b'\in X_L^F$ such that an augmenting path from $a$ to $a'$ and an augmenting path from $b$ to $b'$ can be found. Thus, based on $M$, a matching $M'$ with larger size can be obtained, contradicting tha fact that $M$ is a maximum matching in $\Phi_t$.
2) There exists a vertex $c$ in $X_L^S$ such that edge $(a,c)$ is in $E$. According to the construction process of $E$, there exist a vertex $d$ in $X_L^F$ such that edges $(c, d)$, $(d, b)$ are in $E$. Thus, an augmenting path from $a$ to $b$ can be found, and a matching $M'$ with larger size can be obtained, which is a contradiction. 3) There exists a vertex $c$ in $X_E^S$ such that edge $(b,c)$ is in $E$. According to the construction process of $E$, there exist a vertex $d$ in $X_E^F$ such that edges $(c, d)$, $(d, a)$ are in $E$. Thus, an augmenting path from $a$ to $b$ can be found, and a matching $M'$ with larger size can be obtained, which is a contradiction. 4) There exists a vertex $c$ in $X_L^S$ such that edge $(a, c)$ is in $E$ and there exists a vertex $d$ in $X_E^S$ such that edge $(b, d)$ is in $E$. Based on $M$, it is easy to find an augmenting path from $a$ to $b$ to get a matching with larger size, which is a contradiction.

Therefore, there must exist a perfect matching in bipartite graph $\Phi_t=(R\cup S,E)$.
\end{proof}

Let $\tau_t$ be the sum weights of the minimum weighted perfect matching in $\Phi_t=(R\cup S,E)$.
\begin{claim}
The value of $\tau$ is $\sum_{t=1}^{l} \tau_t$.
\end{claim}
\begin{proof}
By equation \ref{E35}, we have
\begin{equation}
\label{Eadd}
\begin{aligned}
\sum_{i=1}^{m} {\sum_{j=1}^{u} q_t^{E(ij)}}- \sum_{i=1}^{m}  q_t^{E(i)} =\sum_{i=1}^{m} {\sum_{j=1}^{u}  \mid q_t^{L(ij)}\mid}  - \sum_{i=1}^{m} \mid q_t^{L(i)}\mid
\end{aligned}
\end{equation}
We denote $\sum_{i=1}^{m} {\sum_{j=1}^{u} q_t^{E(ij)}}- \sum_{i=1}^{m}  q_t^{E(i)}$ and $\sum_{i=1}^{m} {\sum_{j=1}^{u}  \mid q_t^{L(ij)}\mid}  - \sum_{i=1}^{m} \mid q_t^{L(i)}\mid$ as the remaining number of clients entering or leaving $B_1,\ldots,B_u$, respectively. Then, the remaining number of clients entering $B_1,\ldots,B_u$ is equal to the remaining number of clients leaving $B_1,\ldots,B_u$.

For each type $t\in [l]$, a weighted bipartite graph $\Phi_t$ is constructed. Let $H_i$ be the number of edges in perfect matching of $\Phi_t$, and let $H=\sum_{i=1}^l H_i$. Since the number of clients entering or leaving in the configuration $(B,k_B,\boldsymbol Q_1^E,\boldsymbol Q_1^L,\ldots,\boldsymbol Q_m^E,\boldsymbol Q_m^L)$ is equal to the number of clients entering or leaving in the configurations $(B_j,k_{B_j},\boldsymbol Q_1^{E(j)},\boldsymbol Q_1^{L(j)},\ldots,\boldsymbol Q_m^{E(j)},\boldsymbol Q_m^{L(j)})$, we can get that the number of pair portals to get $\tau$ value is equal to $H$.

In each graph $\Phi_t$, the perfect matching in $\Phi_t$ gives the relation between the portals of $B_1,\ldots, B_u$ and the ones in $B$, where the clients of type $t$ entering or leaving through those portals. The minimum cost of the solutions in subproblem $(B,k_B,\boldsymbol Q_1^E,\boldsymbol Q_1^L,\boldsymbol Q_2^E,\boldsymbol Q_2^L,\ldots,\boldsymbol Q_m^E, \boldsymbol Q_m^L)$ with table entry $DP[B,k_B,\boldsymbol Q_1^E,\boldsymbol Q_1^L,\boldsymbol Q_2^E, \boldsymbol Q_2^L,\ldots,\\ \boldsymbol Q_m^E,\boldsymbol Q_m^L]$ is based on the value of $\tau$, which is the cost between portals of $B_1,B_2,\ldots,B_u$ and the ones in $B$. Thus, we can get that $\tau \leq \sum_{t=1}^{l} \tau_t$.  We now prove that the value of $\tau$ cannot be less than $\sum_{t=1}^{l} \tau_t$. Assume that $\tau <\sum_{t=1}^{l} \tau_t$. The value of $\tau$ is the cost between portals of $B_1,B_2,\ldots,B_u$ and the ones in $B$, and by the fair constraints, there are $l$ types of clients to deal with. Thus, the value of $\tau$ can be divided into $l$ values, i.e., $\tau=\tau_1'+\ldots+\tau_l'$. By assumption, $\tau <\sum_{t=1}^{l} \tau_t$. Thus, there is a type $t$ such that $\tau_t'<\tau_t$. Let $M$ be the perfect matching in $\Phi_t$ to get the value $\tau_t$, and let $S$ be the set of pair portals used to get the value $\tau_t'$. By the above discussion, the number of pair portals in $S$ is equal to the number of edges in $M$. For each pair portal $(p, p')$ in $S$, by the construction process of edge set $E$, there exists a corresponding edge with weight $dist(p, p')$ in graph $\Phi_t$. Moreover, let $M'$ be the set of corresponding edges of the pair portals in $S$. It is easy to get that $M'$ is a perfect matching in $\Phi_t$ with weight smaller than $M$, contradicting that $M$ is a minimum weighted perfect matching in $\Phi_t$. Thus, $\tau =\sum_{t=1}^{l} \tau_t$.
\end{proof}

Since the number of vertices and the number of edges in graph $\Phi_t$ are polynomial, we can find the minimum cost perfect matching of graph $\Phi_t=(R\cup S,E)$ and compute the value of $\tau$ in polynomial time (\cite{kuhn1955hungarian,munkres1957algorithms}).
\end{proof}

\subsubsection{Assigning Clients to Facilities}\label{ASSi}
Given an instance $(X, dist, \mathcal{C}, \mathcal{F}, k,l, \boldsymbol \alpha, \boldsymbol \beta)$ of the fair $k$-median problem, the above dynamic programming procedure obtains a set $F\subseteq\mathcal{F}$ of size at most $k$ which is a set of open facilities for the fair $k$-median problem and the number of clients assigned to each open facility satisfying the fairness constraints. 
However, for each client $c\in\mathcal{C}$, the dynamic programming procedure only ensures that the number of clients assigned to each open facility satisfies fair constraints, and does not obtain the assignment which maps each client $c\in\mathcal{C}$ to a facility $f\in F$. 
We apply the min-cost max-flow technique to obtain the assignment which maps each client $c\in\mathcal{C}$ to a facility $f\in F$ for fair $k$-median problem. Consider the solution obtained by the above dynamic programming, we obtain a set $F\subseteq\mathcal{F}$ of size at most $k$ which is a set of open facilities for fair $k$-median problem and each $f\in F$ is given some numbers to denote the number of each type clients assigned to it.

Given an open facilities set $F=\{f_1,f_2,\ldots,f_\eta\}$ (where $\eta\leq k$, $F\subseteq\mathcal{F}$) and a vector $\boldsymbol \lambda_i$ with $l$ values for each open facility in $F$, where $\boldsymbol \lambda_i$ ($i\in [\eta]$) is a vector within $l$ values, and each value $\lambda_i^h$ ($0\leq\lambda_h^i\leq n, h\in [l]$) in $\boldsymbol \lambda_i$ denotes the number of the $h$-th type clients that are assigned to $f_i$,  we construct a bipartite graph $G=(U\cup V,A)$ with both capacities and weights on the edges and $b$ values on the vertices as follows.
Let $U=\{c_j^h\mid h\in [l],j\in [\mid \mathcal{C}_h\mid]\}\cup \{s\}$, $V=\{f_i^h\mid i\in [\eta],h\in [l]\}\cup \{t\}$, where $s$ denote the source point, $t$ denotes the sink point, $c_j^h$ denotes the $j$-th client of the type $h$ clients ($c_j^h\in \mathcal{C}_h$), $f_i^h$ denotes the $i$-th facility $f_i\in F$ and we label this facility as type $h$ ($h\in [l]$).

The edges in $G$ can be constructed as follows. For each fixed $h\in [l]$, and for each vertex $c_j^h\in U$ and $f_i^h\in V$ (where $j\in [\mid\mathcal{C}_h\mid],i\in [\eta]$), add a directed edge $e$ from $c_j^h$ to $f_i^h$ in $G$. The capacity of the edge is 1 and the cost of the edge is $dist(c_j^h,f_i)$. Note that parallel edges may exit (parallel edges are kept in $G$). Furthermore, for each vertex $c_j^h\in U-\{s\}$, add a directed edge from $s$ to $c_j^h$ with capacity 0 and cost 0. For each vertex $f_i^h\in V$, add a directed edge from $f_i^h$ to $t$  with capacity 0 and cost 0.

Finally, we set $b$ $value$ for each vertex in bipartite graph $G=(U\cup V,A)$. For each vertex $v\in G$, let $b(v)$ denote the value of each vertex. Then, positive $b(v)$ means that the vertex $v$ supplies $b(v)$ units of flow, and negative $b(v)$ means that the vertex $v$ demands $b(v)$ units of flow. For each vertex $c_j^h\in U$, set $b(c_j^h)=1$. For each vertex $f_i^h\in V$,  set $b(f_i^h)=-\lambda_i^h$. Since $\sum\limits_{i=1}^{\eta} {\sum\limits_{h=1}^{l} b(f_i^h)}=- \mid\mathcal{C}\mid$ and $\sum\limits_{h=1}^{l}{\sum\limits_{j=1}^{\mid\mathcal{C}_h\mid} b(c_j^h)}=\mid\mathcal{C}\mid$
, we have $\sum\limits_{v\in G} {b(v)}=0$. Then, we run a min-cost max-flow algorithm on the bipartite graph $G=(U\cup V,A)$. For the min-cost max-flow found in graph $G$, and for each edge $(c_j^h,f_i^h)$ in $G$, if edge $(c_j^h,f_i^h)$ has a flow of 1, then we assign the client $c_j^h$ to facility $f_i$. Thus, by applying the min-cost max-flow method in graph $G$, we obtain the assignment which maps each client $c\in\mathcal{C}$ to a facility $f\in F$ in polynomial time.

\subsubsection{Analysis and Running Time}
We show that our algorithm outputs a solution of cost at most $(1+\epsilon)$ times the cost of the optimal solution and it gives a QPTAS for the fair $k$-median problem in doubling metrics. Let $OPT$ denote one of the optimal solution of the  fair $k$-median problem in doubling metrics, and let $cost(OPT)$ denote the cost of this optimal solution.

\begin{theorem}
Given an instance $(X, dist, \mathcal{C}, \mathcal{F}, k,l, \boldsymbol \alpha, \boldsymbol \beta)$ of the fair $k$-median problem in doubling metrics, there exists an algorithm such that a solution of cost at most $(1+\epsilon)cost(OPT)$ can be obtained, and the running time is at most $\tilde{O}(n^{{(\frac{\log n}{\epsilon})}^{O(d)}l})$.
\end{theorem}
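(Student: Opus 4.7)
The plan is to split the argument into two independent parts: an approximation analysis and a running-time analysis, then combine them.

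For the approximation ratio, I would start from a fixed optimal solution $OPT$ with assignment $\mu^*$ and modify only its \emph{routing}, not its set of open facilities or its per-facility client counts, so that every client-to-facility path respects the portal structure of the random split-tree $T$. For a pair $(c,\mu^*(c))$, let $i^*$ be the highest level at which $c$ and $\mu^*(c)$ lie in different blocks of $T$; by Lemma~\ref{5} the path can be rerouted through a portal in each of the two level-$i^*$ blocks with additive overhead $O(\rho 2^{i^*+1})$, and recursively at each lower separating level. By property~(4) of Lemma~\ref{splitpro}, the probability that $c$ and $\mu^*(c)$ are separated at level $i$ is at most $O(d)\cdot dist(c,\mu^*(c))/2^i$, so the expected extra cost contributed by this pair is
\begin{equation*}
\sum_{i=0}^{\ell} O(\rho 2^{i+1})\cdot O(d)\,\frac{dist(c,\mu^*(c))}{2^i}
= O(d\,\rho\,\log n)\cdot dist(c,\mu^*(c)).
\end{equation*}
Summing over all clients and choosing $\rho = \Theta(\epsilon/(d\log n))$ bounds the expected overhead by $\epsilon\cdot cost(OPT)$. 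Because the rerouted solution opens the same facilities and serves the same clients at each facility as $OPT$, the fair constraints are preserved, and the resulting counts $(k_B,\boldsymbol Q_1^E,\boldsymbol Q_1^L,\ldots,\boldsymbol Q_m^E,\boldsymbol Q_m^L)$ form a valid configuration of the DP at every block. Since the DP exhaustively minimises over all such configurations, its optimum value is at most the cost of this rerouting, giving $(1+\epsilon)\,cost(OPT)$ in expectation; the preprocessing loss of $\epsilon\cdot cost(OPT)$ from the aspect-ratio normalisation in Section~\ref{Sec 3.1} is absorbed by rescaling $\epsilon$.

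For the running time, the split-tree has $O(n\log n)$ blocks, each with $m=\rho^{-O(d)}=(\log n/\epsilon)^{O(d)}$ portals by Lemma~\ref{L4}. A table entry at block $B$ is indexed by $k_B\in\{0,\ldots,k\}$ and by $2m$ integer vectors of length $l$, each entry in $[-n,n]$; the number of entries per block is therefore $n^{O(ml)}=n^{(\log n/\epsilon)^{O(d)}\,l}$. For each entry we enumerate consistent configurations over the at most $2^{O(d)}$ children of $B$, again at cost $n^{(\log n/\epsilon)^{O(d)}\,l}$ since $2^{O(d)}$ only affects the exponent by a constant factor, and for each candidate we compute $\tau$ by solving $l$ minimum-weight perfect matching problems on bipartite graphs with $O(n)$ vertices, which is polynomial by the Hungarian method. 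Multiplying and absorbing $\log$ factors yields total time $\tilde O(n^{(\log n/\epsilon)^{O(d)}\,l})$. The final min-cost max-flow step described in Section~\ref{ASSi} runs in polynomial time on a network of size $O(n+k)$ and returns an integral assignment achieving exactly the DP cost, so it does not affect the bound.

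The main obstacle is the portalisation argument: I must show that an \emph{integral} rerouting of $OPT$ through portals exists that is both feasible for the fair constraints and admissible as a DP configuration, while only losing a $(1+\epsilon)$ factor. The key observation that makes this work is that the rerouting changes only the geometric path taken by each client, never the identity of the client or its assigned facility, so per-facility type counts are invariant and the fairness ratios are preserved automatically; all that needs to be controlled is the distance blow-up, which is handled by Lemma~\ref{5} together with the level-$i$ separation probability bound from Lemma~\ref{splitpro}. Once this is established, feasibility for the DP follows because the per-portal tallies $\boldsymbol Q_i^E,\boldsymbol Q_i^L$ induced by the rerouted $OPT$ automatically satisfy the consistency relations at every block, and optimality within the DP is inherited from the enumeration.
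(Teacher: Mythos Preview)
Your proposal is correct and follows essentially the same route as the paper: the approximation bound comes from rerouting $OPT$ through portals, summing the per-level overhead $O(\rho 2^{i+1})$ against the separation probability $O(d)\,dist(c,\mu^*(c))/2^i$ from Lemma~\ref{splitpro}, and choosing $\rho=\Theta(\epsilon/(d\log n))$; the running time comes from counting table entries, child configurations, and the cost of the perfect-matching subroutine for~$\tau$. If anything, your write-up is slightly more careful than the paper's in explicitly arguing that fairness is preserved (since only routing, not the client-to-facility map, is altered) and hence that the rerouted $OPT$ induces a valid DP configuration at every block.
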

\begin{proof}
Let $\mathcal{G}$ be the solution obtained by the dynamic programming and clients assignment procedure. For each client $c\in \mathcal{C}$, we say that $f\in\mathcal{G}$ serves a client $c$ if $\mathcal{G}(c)=f$. By lemma \ref{splitpro}, and for any pair of points $u$ and $v$ in the metrics, the probability that $u$ and $v$ are divided into different blocks at level $i$ is at most $O(d)\cdot\frac{dist(u,v)}{2^i}$.
By lemma \ref{5}, for any two points $u$ and $v$ at level $i$, the existence of portal set in each block of split-tree incurs an additional cost of $O(\rho2^{i+1})$, and this is incurred with probability at most $O(d)\cdot\frac{dist(u,v)}{2^i}$. Then, the total additional expected cost for the path between $u$ and $v$ is at most
\begin{equation}
\label{E5}
\begin{aligned}
\sum\limits_{i=0}^{\ell} O(d)(\frac{dist(u,v)}{2^i})\cdot O(\rho2^{i+1})
= \sum\limits_{i=0}^{\ell}O(\rho d\cdot dist(u,v))
= (\ell+1)(O(d\rho dist(u,v))).
\end{aligned}
\end{equation}
Let $\rho=O(\epsilon/d\log n)$, the total additional expected cost for the path between $u$ and $v$ is at most $\epsilon dist(u,v)$.
Consider a pair of points $c$ and $OPT(c)$ ($OPT(c)$ denotes the facility that serves $c$), where $c\in\mathcal C$ is assigned to $OPT(c)$ in the optimal solution $OPT$.
By equation \ref{E5}, 
we have that total additional expected cost for the path between $c$ and $OPT(c)$ is at most $\epsilon dist(c,OPT(c))$. Namely, the total additional expected cost of solution $OPT$ is $\epsilon cost(OPT)$. Let $cost_T(OPT)$ be the total cost of solution $OPT$ based on split-tree $T$. Then, we have that $cost_T(OPT)\leq (1+\epsilon)cost(OPT)$.
Then, we have that $cost_T(\mathcal{G})\leq cost_T(OPT)\leq (1+\epsilon)cost(OPT)$. Thus, the total cost of the solution obtained is no more than $(1+\epsilon)cost(OPT)$.

The running time of our algorithm contains the following three parts: constructing the split-tree $T$, executing the dynamic programming procedure, and executing the min-cost max-flow algorithm. Firstly, the split-tree decomposition within portal set can be found in time $(1/\rho)^{O(d)}n\log \Delta$ (where $\Delta=O(n^4/\epsilon)$).

We now analyze the running time of the dynamic programming process. The running time of the dynamic programming process is related to the number of entries in the dynamic programming table, and the time to calculate the value of each table entry. For a table entry $DP[B,k_B,\boldsymbol Q_1^E,\boldsymbol Q_1^L,\ldots,\boldsymbol Q_m^E,\boldsymbol Q_m^L]$ in the dynamic programming table, we analyze the values of the parameters in table entry. For the given split-tree $T$, the total number of levels in the tree is $O(\log n)$, and there are $n$ leaves in $T$. The number of distinct value of $B$ is equal to the number of nodes in the split-tree $T$, which is at most $O(n\cdot\log n)$.
Moreover, the value of $k_B$  is a non-negative integer bounded by $n$, and the parameters of $q_t^{E(i)}$ and $q_t^{L(i)}$ ($t\in [l]$) in $\boldsymbol Q_i^E$ and $\boldsymbol Q_i^L$ ($i\in [m]$), which are non-negative integers bounded by $n$. Thus, the total number of entries in our dynamic programming table is at most $O(n^{2ml+2}\log n)$.

We now analyze the time to calculate the value of each table entry. For the table entry $DP[B,k_B,\boldsymbol Q_1^E,\boldsymbol Q_1^L,\\ \ldots,\boldsymbol Q_m^E,\boldsymbol Q_m^L]$, we consider all possibilities based on the values of variables $k_{B_j}, \boldsymbol Q_i^{E(j)},\boldsymbol Q_i^{L(j)}$ ($i\in [m],j\in [u]$).
Since each of these $2mlu+u$ variables are non-negative integers bounded by $n$, there are $O(n^{2mlu+u})$ cases to try all possible values of those parameters to satisfy certain constraints. In the process of dynamic programming procedure, the value of $\tau$ is calculated, and it is closely related to the minimum weighted perfect matching in $l$ bipartite graphs, which can be solved in polynomial time (\cite{kuhn1955hungarian,munkres1957algorithms}). Specifically, for any bipartite graph $\Phi_t=(R\cup S, E)$, the number of vertices in $R$ (where $\mid R\mid=\mid S\mid$) is at most $nm\cdot(u+1)$. Thus, the time to calculate the minimum weighted perfect matching of $\Phi_t$ is bounded by $O({(nm\cdot(u+1))}^3)$, and the time to calculate the value of $\tau$ is $O({(nm\cdot(u+1))}^3l)$.

Therefore, the time to calculate the value of each table entry in the dynamic programming process is at most $O(n^{2mlu+u}\cdot(O(1)+O({(nm\cdot(u+1))}^3l)))$ (which can be simplified to $O(n^{2mlu+u+3}m^{3} u^{3}l)$. As the total number of entries in the table is bounded by $O(n^{2ml+2}\log n)$, and the number of portals in a portal set $\rho2^{i+1}$-net of a block at level $i$ is $m=O(d\frac {\log n}{\epsilon})^d$,
the running time of the dynamic programming process is bounded by  $\tilde{O}(n^{{(d\frac{\log n}{\epsilon})}^{O(d)}l})$. Moreover, for the set $F$ of opened facilities obtained by dynamic programming process, the clients can be assigned to $F$ by applying the min-cost max-flow method in $poly(n)$ time.

For the instance $(X, dist, \mathcal{C}, \mathcal{F}, k,l, \boldsymbol \alpha, \boldsymbol \beta)$ of the fair $k$-median problem in doubling metrics, a solution of cost at most $(1+\epsilon)cost(OPT)$  can be obtained in time $\tilde{O}(n^{{(\frac{\log n}{\epsilon})}^{O(d)}l})$.
\end{proof}

\section{An $O(\log k)$-Approximation for Fair $k$-median Problem General Metrics}
In this section, to solve the fair $k$-median problem in general metrics, we first deal with the points in $X$ by a general preprocessing step, then construct a \emph {Hierarchically Separated Trees} (HST) to embed metric space into tree metrics, finally apply a dynamic programming procedure to get the solution.

\subsection{Hierarchically Separated Trees}
Given an instance $(X,dist,\mathcal{C},\mathcal{F},k,l,\alpha,\beta)$ of the fair $k$-median problem, in order to reduce the size of input, we deal with the points in $X$ by the following preprocessing steps and we use $OPT$ to denote an optimal solution of fair $k$-median problem. Find an $O(1)$-approximation feasible solution $L$ of the fair $k$-median problem (\cite{Bercea0KKRS019}), and let $\mathcal{C}(L)$ be the $k$ centers of the feasible solution obtained. Then, for each point $x\in X$, remove $x$, copy a point $x'$ of $x$ and add $x'$ at $\mathcal{C}_x$ where $\mathcal{C}_x$ is the nearest center in $\mathcal{C}(L)$ to $x$. In the new instance obtained, we reduce the size of input locations from $n$ to $k$ with constant factor distortion. Thus, the cost of any solution for this new instance compared with the original instance is increased by at most $O(1)cost(OPT)$.

A \emph{decomposition} of the metric space $(X,dist)$ is a partitioning of $X$ into subsets, which are called blocks. A \emph{hierarchically decomposition} of the metric $(X,dist)$ is a sequence of $\delta+1$ decompositions $D_0,D_1,\ldots,D_{\delta}$ such that every block of $D_{i+1}$ is the union of blocks in $D_i$ and $D_\delta=\{X\}$ and $D_0=\{ {x}\mid x\in X\}$.
The blocks of $D_i$ are called the \emph{level} $i$ blocks. A \emph{hierarchically separated tree} (HST) of the metric space is a complete hierarchically decomposition: the root node is the level $\delta$ block $D_\delta=\{X\}$ and the leaves are the singletons such that $D_0=\{ {x}\mid x\in X\}$. Given a metric space $(X,dist)$, let $\Delta=2^\delta$ be the diameter of the metrics. Starting from $D_{\delta}=\{X\}$, we now discuss the relation between the blocks in $i+1$-th level and $i$-th level, where $0\leq i\leq \delta-1$. Let $r_i=2^i \varrho$ where $\frac{1}{2}\leq \varrho<1$, and let $\pi$ be a random ordering of the points in $X$. For each point $x\in X$, let $\pi(x)$ be the order of $x$ in $\pi$. For each block $S$ at $D_{i+1}$, let $\{y_1, \ldots, y_h\}$ be the set of points in $S$ with order from left to right in $\pi$.
For the point $y_1$, a new block $S^{y_1}$ can be obtained at level $i$ such that $S^{y_1}=\{x\in X\mid dist(x,y_1)\leq r_i\}$.
For any point $y_j$ $(1< j\leq h)$, a new block $S^{y_j}$ can be obtained at level $i$ by the following way: let $Z=\{x\in X\mid dist(x,y_j)\leq r_i\}$.
For each point $x$ in $Z$, if $x$ is not contained in any blocks $\{S^{y_1}, \ldots, S^{y_{j-1}}\}$, then $x$ is contained in $S^{y_j}$.

\begin{lemma}[\cite{fakcharoenphol2004tight}]
\label{HSTpro}
Given a HST of metric space $(X,dist)$ with a sequence of decompositions $D_0,D_1,\ldots,D_{\delta}$, the HST decomposition has the following properties:
\begin{enumerate}
  \item The number of level $\delta$ is $O(\log n)$ (since the aspect ratio of the input metrics $X$ is $O(n^4/\epsilon)$).
  \item Each block of level $i$ has diameter at most $2^{i+1}$, namely the maximum distance between any pair of points in a block of level $i$ is at most $2^{i+1}$.
  \item For each block $S$ of level $i$, there are some edges between the block and its children at level $i-1$, the length from a block $S$ to each of its children at level $i-1$ is equal to $2^i$ (which is also the upper bound of the radius of $S$).
  \item For any pair of points $u,v$ must be separated at level $i$, where $i=(\lfloor\log dist(u,v)\rfloor-1)$.
\end{enumerate}
\end{lemma}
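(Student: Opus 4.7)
The plan is to verify the four structural properties directly from the construction of the HST described above; since the lemma is attributed to Fakcharoenphol--Rao--Talwar, no probabilistic distortion bound is required here, only the deterministic properties of the resulting tree. Property (1) is immediate from the preprocessing step: after snapping each point to the nearest center of an $O(1)$-approximate solution, the aspect ratio of the reduced metric is $O(n^{4}/\epsilon)$, so $\delta = \log \Delta = O(\log n)$ with $\epsilon$ treated as a constant. I would record this as a one-line consequence of $\Delta = 2^{\delta}$ rather than reproving the preprocessing bound.

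For property (2), I would argue by inspection of the inductive construction. Every block $S^{y_{j}}$ at level $i$ is a subset of the closed ball $\{x \in X : dist(x, y_{j}) \leq r_{i}\}$ with $r_{i} = 2^{i}\varrho \leq 2^{i}$, because a point is admitted into $S^{y_{j}}$ only if it survived the peeling by $S^{y_{1}}, \ldots, S^{y_{j-1}}$ and lies within distance $r_{i}$ of $y_{j}$. Two applications of the triangle inequality then yield $dist(x,x') \leq 2 r_{i} \leq 2^{i+1}$ for any $x, x' \in S^{y_{j}}$, which is the claimed diameter bound. Property (3) is a definitional choice of edge weights: in the standard HST encoding of such a ball-based decomposition, one assigns length $2^{i}$ to the edge from a level-$i$ block to each of its level-$(i-1)$ children, and this matches the radius upper bound obtained in (2), which is exactly what one needs for the tree metric to dominate the original metric.

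The most delicate item is property (4), and I would derive it as a contrapositive of (2) rather than by analyzing the random order $\pi$ itself. If $u$ and $v$ lie in a common block at level $i$, then by (2) we have $dist(u,v) \leq 2^{i+1}$; equivalently, once $2^{i+1} < dist(u,v)$ the two points must be separated. Plugging $i = \lfloor \log dist(u,v) \rfloor - 1$ into this criterion gives $2^{i+1} = 2^{\lfloor \log dist(u,v)\rfloor} \leq dist(u,v)$, so separation is guaranteed at this level. The main obstacle is the boundary case in which $dist(u,v)$ is exactly a power of two, where the inequality from (2) is not strict; I would close this by invoking $\varrho < 1$ strictly in the construction of $r_{i}$, which makes every level-$i$ block have diameter strictly less than $2^{i+1}$ and therefore forces separation at the stated level in all cases.
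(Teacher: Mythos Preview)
Your argument is correct, and in fact the paper does not supply its own proof of this lemma at all: it is stated with a citation to Fakcharoenphol, Rao, and Talwar and then used as a black box. So there is nothing to compare against in the paper itself; you have provided what the paper omits.

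A couple of minor remarks on your write-up. Your handling of property~(4) is the right idea, and your use of the strict inequality $\varrho<1$ to close the power-of-two boundary case is exactly the clean way to do it: the level-$i$ diameter is then at most $2r_i = 2^{i+1}\varrho < 2^{i+1} \le dist(u,v)$, forcing separation. You might note explicitly that once $u$ and $v$ are in different blocks at level $i$, they are automatically in different blocks at every level below $i$ as well (since the decomposition refines downward), which is implicitly how the lemma gets used later in bounding $dist_T(u,v)$. Finally, the edge case $dist(u,v)<2$, which would give $i<0$, is harmless here because the preprocessing normalizes the minimum inter-point distance; it would not hurt to say so in one clause.
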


A general metrics can be probabilistically embedded into HST with a distortion of $O(\log \mid\mathcal{X}\mid)$ where $\mid\mathcal{X}\mid$ is the size of input locations and equals to $k$ from the above preprocessing step and this embedding can be constructed in polynomial time (\cite{fakcharoenphol2004tight}). We now define the distance function in HST. For any two blocks $S_i, S_j$ in HST $T$, and for any two points $u\in S_i$, $v\in S_j$, the distance between $u$ and $v$ in $T$ is denoted as the length of the shortest path distance in $T$ between block $S_i$ and block $S_j$.

\subsection{Algorithm for Fair $k$-median Problem in General Metrics}

In this section, we give the algorithm to solve the fair $k$-median problem in general metrics.
\subsubsection{Dynamic Programming States}
Given a HST $T$, the dynamic programming proceeds on $T$ from the leaves to the root. For any block $S$ of HST $T$, let $T_B$ be the subtree rooted at block $B$, and let $\ell(S)$ be the number of children of block $S$, and assume that the set of children of block $S$ is $\{S_1, \ldots, S_{\ell(S)}\}$. For $1\leq i\leq \ell(S)$, let $T_{S,i}$ denote the subtree induced by the blocks in $\{S\}\cup T_{S_1}\cup T_{S_2}\cup\ldots\cup T_{S_i}$, and each subtree $T_{S,i}$ is a subproblem in our dynamic programming process, which includes the information that how many clients of each type entering or leaving $T_{S,i}$.


A table entry in the dynamic programming process is a tuple $DP[S,k_S,i,\boldsymbol Q_S^{enter}, \boldsymbol Q_{Sout}^{enter}, \boldsymbol Q_S^{leave},\boldsymbol Q_{Sout}^{leave}]$, where the parameters are defined as follows:
\begin{enumerate}
  \item $S$ is a block of HST $T$.
  \item $k_S$ is the number of facilities that are opened in the subtree $T_{S,i}$.
  \item $i$ is an integer indicating how many blocks in children are used to get subtree $T_{S,i}$.
  \item $\boldsymbol Q_S^{enter}$ is a vector with $l$ values, and $q_t^E$ ($0\leq q_t^E\leq n$) in $\boldsymbol Q_S^{enter}$ denotes the number of type $t$ clients inside the subtree $T_S\setminus T_{S,i}$ that are served by facilities located in $T_{S,i}$.
  \item $\boldsymbol Q_{Sout}^{enter}$ is a vector with $l$ values, and $q_t^{E(out)}$ ($0\leq q_t^{E(out)}\leq n$) in $\boldsymbol Q_{Sout}^{enter}$ denotes the number of type $t$ clients inside the subtree $T\setminus T_S$ that are served by facilities located in $T_{S,i}$.
  \item $\boldsymbol Q_S^{leave}$ is a vector with $l$ values, and $q_t^L$ ($-n\leq q_t^L\leq 0$) in $\boldsymbol Q_S^{leave}$ denotes the number of type $t$ clients in the subtree $T_{S,i}$ that are served by facilities located inside the subtree $T_S\setminus T_{S,i}$.
  \item $\boldsymbol Q_{Sout}^{leave}$ is a vector with $l$ values, and $q_t^{L(out)}$ ($-n\leq q_t^{L(out)}\leq 0$) in $\boldsymbol Q_{Sout}^{leave}$ denotes the number of type $t$ clients in the subtree $T_{S,i}$ that are served by facilities located inside the subtree $T\setminus T_S$.
\end{enumerate}


The solutions at the root $R$ of HST $T$ are in the table entry $DP[R,k,\ell(R),\boldsymbol 0,\boldsymbol 0,\boldsymbol 0,\boldsymbol 0]$, where $\boldsymbol 0$ is a vector with $l$ zero components. Among all these solutions in $DP[R,k,\ell(R),\boldsymbol 0,\boldsymbol 0,\boldsymbol 0,\boldsymbol 0]$, the algorithm outputs the one with the minimum cost.

The base case of the dynamic programming consist of leaves of the HST where at most one facility can be opened at a given location. Let $T_{S,0}$ denote the simple tree within the singleton $\{S\}$.
Since there is only one facility or client in each leaf of HST, we consider the following three cases for the table entry of leaf node $S$:
\begin{enumerate}
\item For a block $S$ at level 0 of HST $T$, there is only a facility in the block and this facility is opened. The table entry of the subproblem at $S$ is $DP[S,1,0,\boldsymbol 0,\boldsymbol Q_{Sout}^{enter},\boldsymbol 0,\boldsymbol 0]=0$.
    In addition, the $t$-th value in $\boldsymbol Q_{Sout}^{enter}$, denoted as $q_t$, is the number of type $t$ clients inside the subtree $T\setminus T_S$ that are assigned to the facility located in $T_{S}$. Thus, in order to satisfy the fair constraints, for each $t\in [l]$, $\alpha_t\leq \frac{q_t}{\sum_{t=1}^{l} q_t} \leq \beta_t$.
\item For a block $S$ at level 0 of HST $T$, there is only a facility in the block but this facility is not opened. The table entry of the subproblem at $S$ is $DP[S,0,0,\boldsymbol 0,\boldsymbol 0,\boldsymbol 0,\boldsymbol 0]=0$.
\item For a block $S$ at level 0 of HST $T$, there is only a client in the block. The table entry of the subproblem at $S$ is $DP[S,0,0,\boldsymbol 0,\boldsymbol 0,\boldsymbol 0,\boldsymbol Q_{Sout}^{leave}]=0$. Since this client must be a client of type $t$ ($t\in [l]$), the $t$-th value in $\boldsymbol Q_{Sout}^{leave}$ will be -1.
\end{enumerate}

\subsubsection{Computing Table Entries}
Assume that we consider the subproblem on the subtree $T_{S,i}$ where $S$ is a block at level $j$ of HST $T$. Let $(S,k_S,i,\boldsymbol Q_S^{enter},\boldsymbol Q_{Sout}^{enter},\boldsymbol Q_S^{leave},\boldsymbol Q_{Sout}^{leave})$ be the configuration of $T_{S,i}$ with table entry $DP[S,k_S,i,\boldsymbol Q_S^{enter},\boldsymbol Q_{Sout}^{enter},\\ \boldsymbol Q_S^{leave},\boldsymbol Q_{Sout}^{leave}]$, where $1\leq i\leq \ell(v)$. To find the minimum cost of $DP[S,k_S,i,\boldsymbol Q_S^{enter},\boldsymbol Q_{Sout}^{enter},\boldsymbol Q_S^{leave},\boldsymbol Q_{Sout}^{leave}]$, we enumerate all consistent subproblems corresponding to $T_{S_i}$ and $T_{S,i-1}$ (if $i\geq 2$). 
Moreover, the value of the table entries corresponding to $T_{S_i}$ and $T_{S,i-1}$ can be extracted from the dynamic programming table. The recursive expression of the dynamic programming procedure has the following two cases.

Case 1. $i=1$. In this case, we consider the subtree $T_{S,1}$ of $T$ induced by the blocks in $T_{S_1}\cup {S}$. As $S_1$ is the first node in the total order of the children of $S$, the assignments of the subproblem $T_{S,1}$ is the same as the assignments of $T_{S_1}$, and the facilities and clients in any feasible solution to the subproblem $T_{S,1}$ must come from $T_{S_1}$. Thus, we have
\begin{equation*}
\begin{aligned}
&DP[S,k_S,1,\boldsymbol Q_S^{enter},\boldsymbol Q_{Sout}^{enter},\boldsymbol Q_S^{leave},\boldsymbol Q_{Sout}^{leave}] =DP[S_1,k_{S_1},\ell(S_1),\boldsymbol 0,\boldsymbol Q_{Sout}^{enter*},\boldsymbol 0,\boldsymbol Q_{Sout}^{leave*}]+\\&dist_T(S,S_1)\cdot(\sum_{t=1}^l q_t^{E(out)*}+\sum_{t=1}^l \mid q_t^{L(out)*}\mid)
\\& with\ k_{S_1}=k_S, \boldsymbol Q_{Sout}^{enter*}=\boldsymbol Q_S^{enter}+\boldsymbol Q_{Sout}^{enter},\boldsymbol Q_{Sout}^{leave*}=\boldsymbol Q_S^{leave}+\boldsymbol Q_{Sout}^{leave}.
\end{aligned}
\end{equation*}
where $k_{S_1}$ is the number of open facilities in the subtree $T_{S_1}$ satisfying the constraint $k_{S_1}=k_S$. $\boldsymbol Q_{Sout}^{enter*}$ denotes the number of each type clients inside the subtree $T\setminus T_{S_1}$ that are served by facilities inside $T_{S_1}$ and $q_t^{E(out)*}$ in $\boldsymbol Q_{Sout}^{enter*}$ denotes the number of type $t$ ($t\in [l]$) clients inside the subtree $T\setminus T_{S_1}$ that are served by facilities inside $T_{S_1}$, $\boldsymbol Q_{Sout}^{leave*}$ denotes the number of each type clients inside $T_{S_1}$ that are served by facilities inside the subtree $T\setminus T_{S_1}$ and $q_t^{L(out)*}$ in $\boldsymbol Q_{Sout}^{leave*}$ denotes the number of type $t$ ($t\in [l]$) clients inside $T_{S_1}$ that are served by facilities inside the subtree $T\setminus T_{S_1}$. $dist_T(S,S_1)$ denotes the length of the edge $(S,S_1)$.
Moreover, the values of $\boldsymbol Q_{Sout}^{enter*}$ and $\boldsymbol Q_{Sout}^{leave*}$ satisfy the following constraints.
\begin{enumerate}
  \item{(1)}  $\boldsymbol Q_{Sout}^{enter*}$=$\boldsymbol Q_S^{enter}$+$\boldsymbol Q_{Sout}^{enter}$. From the definition of $\boldsymbol Q_{Sout}^{enter*}$ above, the set of clients to get value $\boldsymbol Q_{Sout}^{enter*}$ are either in the subtree $T_S\setminus T_{S_1}$ or in the subtree $T\setminus T_S$. From the table entry, the number of clients in the subtree $T_S\setminus T_{S_1}$ that are served by the facilities in $T_{S,1}$ is in $\boldsymbol Q_S^{enter}$, and the number of clients in the subtree $T\setminus T_S$ that are served by the facilities in $T_{S,1}$ is in $\boldsymbol Q_{Sout}^{enter}$.
  \item{(2)}  $\boldsymbol Q_{Sout}^{leave*}$=$\boldsymbol Q_S^{leave}$+$\boldsymbol Q_{Sout}^{leave}$. From the definition of $\boldsymbol Q_{Sout}^{leave*}$, the set of facilities to get value $\boldsymbol Q_{Sout}^{leave*}$ are either in the subtree $T_S\setminus T_{S_1}$ or in the subtree $T\setminus T_S$. From the table entry, the number of clients in $T_{S_1}$ that are served by the facilities in the subtree $T_S\setminus T_{S_1}$ is in $\boldsymbol Q_S^{leave}$, and the number of clients in $T_{S_1}$ that are served by the facilities in the subtree $T\setminus T_S$ is in $\boldsymbol Q_{Sout}^{leave}$.
\end{enumerate}

Case 2. $2\leq i\leq \ell(S)$. In this case, the value of table entry on the subtree $T_{S,i}$ is obtained from the subproblems on the subtrees $T_{S,i-1}$ and $T_{S_i}$, by considering all various possible values of the parameters $k_S, \boldsymbol Q_S^{enter}, \boldsymbol Q_{Sout}^{enter}, \boldsymbol Q_S^{leave}, \boldsymbol Q_{Sout}^{leave}$. Thus, we have
\begin{equation*}
\begin{aligned}
  &DP[S,k_S,i,\boldsymbol Q_S^{enter},\boldsymbol Q_{Sout}^{enter},\boldsymbol Q_S^{leave},\boldsymbol Q_{Sout}^{leave}]=\min_{
  } \{DP[S,k_S',i-1,\boldsymbol Q_S^{enter'},\boldsymbol Q_{Sout}^{enter'},\boldsymbol Q_S^{leave'},\boldsymbol Q_{Sout}^{leave'}]
   \\&+DP[S_i,k_{S_i},\ell(S_i),\boldsymbol 0,\boldsymbol Q_{S_iout}^{enter*},\boldsymbol 0,\boldsymbol Q_{S_iout}^{leave*}]
  +dist_T(S,S_i)\cdot (\sum_{t=1}^l q_t^{E(out)*}+\sum_{t=1}^l \mid q_t^{L(out)*}\mid)
  \\& \ with\ k_S'+k_{S_i}=k_S,
  \boldsymbol Q_S^{enter'}+\boldsymbol Q_{Sout}^{enter'}+\boldsymbol Q_S^{leave'}+\boldsymbol Q_{Sout}^{leave'}+\boldsymbol Q_{S_iout}^{enter*}+\boldsymbol Q_{S_iout}^{leave*} \\&=\boldsymbol Q_S^{enter}+\boldsymbol Q_{Sout}^{enter}+\boldsymbol Q_S^{leave}+\boldsymbol Q_{Sout}^{leave}
  \}
\end{aligned}
\end{equation*}
where $\boldsymbol Q_S^{enter'}$ denotes the number of each type clients inside the subtree $T_S\setminus T_{S,i-1}$ that are served by facilities inside $T_{S,i-1}$ and $\boldsymbol Q_{Sout}^{enter'}$ denotes the number of each type clients inside the subtree $T\setminus T_S$ that are served by facilities inside $T_{S,i-1}$, $\boldsymbol Q_S^{leave'}$ denotes the number of each type clients inside the subtree $T_{S,i-1}$ that are assigned to facilities inside the subtree $T_S\setminus T_{S,i-1}$ and $\boldsymbol Q_{Sout}^{leave'}$ denotes the number of each type clients inside the subtree $T_{S,i-1}$ that are assigned to facilities inside the subtree $T\setminus T_S$.
Moreover, $\boldsymbol Q_{S_iout}^{enter*}$ denotes the number of each type clients inside the subtree $T\setminus T_{S_i}$ that served by facilities inside $T_{S_i}$, $q_t^{E(out)*}$ in $\boldsymbol Q_{S_iout}^{enter*}$ denotes the number of type $t$ ($t\in [l]$) clients inside the subtree $T\setminus T_{S_i}$ that are served by facilities inside $T_{S_i}$, $\boldsymbol Q_{S_iout}^{leave*}$ denotes the number of each type clients inside $T_{S_i}$ that are served by facilities inside the subtree $T\setminus T_{S_i}$, $q_t^{L(out)*}$ in $\boldsymbol Q_{S_iout}^{leave*}$ denotes the number of type $t$ ($t\in [l]$) clients inside $T_{S_i}$ that are served by facilities inside the subtree $T\setminus T_{S_i}$, and $dist_T(S,S_i)$ denotes the length of the edge $(S,S_i)$.
The values of $\boldsymbol Q_S^{enter'}$, $\boldsymbol Q_{Sout}^{enter'}$, $\boldsymbol Q_S^{leave'}$, $\boldsymbol Q_{Sout}^{leave'}$ and $\boldsymbol Q_{S_iout}^{enter*}$, $\boldsymbol Q_{S_iout}^{leave*}$ should satisfy the following consistent constraints:
\begin{enumerate}
  \item{(1)} $k_S'+k_{S_i}=k_S$. $k_S'$ is the number of open facilities in the subtree $T_{S,i-1}$ and $k_{S_i}$ is the number of open facilities in the subtree $T_{S_i}$ satisfying the constraint $k_S'+k_{S_i}=k_S$.
  \item{(2)} $\boldsymbol Q_S^{enter'}+\boldsymbol Q_{Sout}^{enter'}+\boldsymbol Q_S^{leave'}+\boldsymbol Q_{Sout}^{leave'}+\boldsymbol Q_{S_iout}^{enter*}+\boldsymbol Q_{S_iout}^{leave*}=\boldsymbol Q_S^{enter}+\boldsymbol Q_{Sout}^{enter}+\boldsymbol Q_S^{leave}+\boldsymbol Q_{Sout}^{leave}$.
      The subproblem on the subtree $T_{S,i}$ corresponding to $DP[S,k_S,i,\boldsymbol Q_S^{enter},\boldsymbol Q_{Sout}^{enter},\boldsymbol Q_S^{leave}, \boldsymbol Q_{Sout}^{leave}]$ is consistent with subproblems on the subtrees $T_{S,i-1}$ and $T_{S_i}$ corresponding to $DP[S,k_S',i-1,\boldsymbol Q_S^{enter'},\boldsymbol Q_{Sout}^{enter'},\boldsymbol Q_S^{leave'},\\ \boldsymbol Q_{Sout}^{leave'}]$ and $DP[S_i,k_{S_i},\ell(S_i),\boldsymbol 0,\boldsymbol Q_{S_iout}^{enter*},\boldsymbol 0, \boldsymbol Q_{S_iout}^{leave*}]$.
\end{enumerate}

Finally, based on the solution obtained by the dynamic programming, we can apply min-cost max-flow technique same as the one in doubling metrics to guarantee the assignment function for fair $k$-median problem in general metrics.

\subsubsection{Analysis and Running time}
In this section, we show that our algorithm outputs a solution of cost at most $O(\log k)$ times the cost of the optimal solution in polynomial time. Let $OPT$ denote a optimal solution of the fair $k$-median problem in general metrics, and let $cost(OPT)$ denote the cost of this optimal solution.
\begin{theorem}
Given an instance $(X, dist, \mathcal{C}, \mathcal{F}, k,l, \boldsymbol \alpha, \boldsymbol \beta)$ of the fair $k$-median problem in general metrics, there exists an algorithm such that a solution of cost at most $O(\log k)cost(OPT)$ can be obtained, and the running time is at most $n^{O(l)}$.
\end{theorem}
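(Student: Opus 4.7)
The plan is to bound the expected cost of the algorithm's output and the running time separately, then combine them. For the cost side, my strategy is to compose three distortions: (i) the preprocessing step that collapses the $n$ input locations onto the $k$ centers of an $O(1)$-approximate feasible solution (inflating the cost of any solution by at most $O(1)\cdot\mathrm{cost}(OPT)$); (ii) the probabilistic embedding of the resulting $k$-point metric into an HST, which by \cite{fakcharoenphol2004tight} distorts distances by a factor of $O(\log k)$ in expectation; and (iii) the claim that the DP exactly solves the fair $k$-median problem on the HST metric. Chaining these three facts yields a solution of expected cost at most $O(\log k)\cdot\mathrm{cost}(OPT)$ in the original metric.

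The main technical step is verifying (iii), the optimality of the DP on the HST. I would prove by bottom-up induction on the subtrees $T_{S,i}$ that $DP[S,k_S,i,\boldsymbol{Q}_S^{enter},\boldsymbol{Q}_{Sout}^{enter},\boldsymbol{Q}_S^{leave},\boldsymbol{Q}_{Sout}^{leave}]$ equals the minimum cost of any partial fair-$k$-median solution restricted to $T_{S,i}$ with the prescribed number of open facilities and the prescribed flow of each client type across the boundary of $T_{S,i}$. The base case at the leaves is immediate from the three cases listed in the base-case definition. For the inductive step, I would use the crucial HST property (Lemma \ref{HSTpro}(3)) that every client crossing from $T_{S_i}$ to outside $T_{S_i}$ incurs a cost exactly equal to $dist_T(S,S_i)$ plus whatever is already accounted for within the child subtree; this is precisely what the additive term $dist_T(S,S_i)\cdot(\sum_t q_t^{E(out)*}+\sum_t|q_t^{L(out)*}|)$ captures in both cases of the recursion. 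The consistency constraints on the vectors $\boldsymbol{Q}$ ensure that every client is accounted for exactly once as it crosses each boundary, so the DP enumerates every feasible configuration that respects the fairness constraints recorded at the leaves via $\alpha_t\le q_t/\sum_{t'}q_{t'}\le \beta_t$. Once the DP returns an open facility set $F$ and per-type demands at each facility, the min-cost max-flow routine from Section \ref{ASSi} produces an assignment whose cost does not exceed the DP value, since the DP value is itself the cost of a feasible fractional assignment that satisfies the same demand constraints, and the flow LP is integral on bipartite unit-capacity networks.

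For the running time, I would count DP entries and per-entry work. The HST has $O(k)$ leaves and $O(\log k)$ levels, so the number of choices for $(S,i)$ is polynomial. The counter $k_S$ ranges over $\{0,\dots,k\}$. Each of the four vectors $\boldsymbol{Q}_S^{enter},\boldsymbol{Q}_{Sout}^{enter},\boldsymbol{Q}_S^{leave},\boldsymbol{Q}_{Sout}^{leave}$ has $l$ coordinates with integer values in $[-n,n]$, giving $n^{O(l)}$ possibilities per vector and $n^{O(l)}$ possibilities overall (since the four vectors together contribute $O(l)$ coordinates). Hence the total number of entries is $n^{O(l)}$. For each entry, Case 1 is constant-time given the child table; Case 2 enumerates the split of $k_S$ into $(k_S',k_{S_i})$ and the split of the six $\boldsymbol{Q}$-vectors subject to the linear consistency constraint, giving again $n^{O(l)}$ candidate configurations to compare. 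The min-cost max-flow post-processing runs in $\mathrm{poly}(n)$. Multiplying everything yields $n^{O(l)}$ total running time.

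The main obstacle I expect is not the cost or time accounting but the careful bookkeeping of the $\boldsymbol{Q}$-vectors in the inductive proof of (iii): one must verify that the three flows inside $T_{S,i-1}$, inside $T_{S_i}$, and across the edge $(S,S_i)$ partition every client--facility incidence in $T_{S,i}$ exactly once (neither double-counted nor missed), and that the splitting between $\boldsymbol{Q}_S$-vectors and $\boldsymbol{Q}_{Sout}$-vectors correctly reflects whether the opposite endpoint lies inside $T_S$ or outside. This relies on the HST's hierarchical structure so that any client--facility pair either sits entirely inside the current subtree or crosses a unique edge on the path in $T$; once this structural observation is formalized, the recursive formulas fall out directly.
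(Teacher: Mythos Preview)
Your proposal is correct and follows essentially the same approach as the paper: preprocess to collapse the instance onto $k$ locations, embed into an HST with $O(\log k)$ expected distortion via \cite{fakcharoenphol2004tight}, solve the resulting tree instance exactly by the DP, and recover an integral assignment by min-cost max-flow. Your outline of an inductive correctness argument for the DP (using the fact that every client--facility path in $T$ crosses the edge $(S,S_i)$ exactly once, contributing $dist_T(S,S_i)$) is in fact more explicit than what the paper provides, and your running-time accounting matches the paper's $n^{O(l)}$ bound.
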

\begin{proof}
Let $\mathcal G$ be the solution obtained by the dynamic programming process and clients assignment procedure.
For each client $c\in \mathcal{C}$, we say that $f\in \mathcal{G}$ serves the client $c$ if $\mathcal{G}(c)=f$. By lemma \ref{HSTpro}, for any pair of points $u,v$, we have that $u$ and $v$ must be separated at level $i$ where $i=(\lfloor \log dist(u,v)\rfloor-1)$. Then, the total expected cost for the path between $u$ and $v$ on the HST is at most
\begin{equation}
\label{E6}
\begin{aligned}
dist_T(u,v)=2\sum_{j=0}^{i} 2^i=2(2^{i+1}-1)=2\cdot 2^{(\lfloor \log dist(u,v)\rfloor)}-2\leq 2dist(u,v)
\end{aligned}
\end{equation}
Consider a pair of points $c$ and $OPT(c)$ where $c\in\mathcal C$ is assigned to $OPT(c)$ ($OPT(c)$ denotes the facility that serves $c$) in the optimal solution $OPT$. By equation \ref{E6}, we have that the total expected cost for the path between $c$ and $OPT(c)$ is at most $dist_T(c,OPT(c))\leq 2dist(c,OPT(c)$. Namely, the total cost of solution $OPT$ on the tree metrics is at most $cost_T(OPT)\leq 2cost(OPT)$. Then, we have that $cost_T(\mathcal{G})\leq cost_T(OPT)\leq 2cost(OPT)$. Moreover, we have that a metric can be probabilistically embedded into HST with a distortion of $O(\log k)$. Thus, the total cost of the solution $\mathcal{G}$ obtained by the above dynamic programming process is no more than $O(\log k)cost(OPT)$

The running time of our algorithm is the sum of three main parts: constructing the HST $T$, executing the dynamic programming procedure and executing the min-cost max-flow algorithm. Firstly, as described above, we have that a HST can be constructed in polynomial time.

For a table entry $DP[S,k_S,i,\boldsymbol Q_S^{enter},\boldsymbol Q_{Sout}^{enter},\boldsymbol Q_S^{leave},\boldsymbol Q_{Sout}^{leave}]$ in the dynamic programming table, we analyze the the value of the parameters in this table entry.
For the given HST $T$, the total number of levels is $O(\log n)$ and there are $n$ leaves in $T$. The number of distinct value of $S$ is equal to the number of nodes in $T$, which from \cite{fakcharoenphol2004tight} is at most $O(n\cdot\log n)$.
Moreover, the value of parameters $k_S$ is non-negative integer bounded above by $n$. The parameters in $\boldsymbol Q_S^{enter},\boldsymbol Q_{Sout}^{enter},\boldsymbol Q_S^{leave},\boldsymbol Q_{Sout}^{leave}$ are also non-negative integers bounded above by $n$. For a fixed $S$, each $i$ denotes the first $i$ children of $S$ from the total ordering of the children of $S$ in $T$. Since $T$ has exactly $n$ leaves, $S$ can have no more than $n$ children. Thus, $i$ is bounded above by $n$. From the above arguments, the total number of table entries in our dynamic programming table is at most $O(n^{4l+3}\log n)$.

We now analyze the time to calculate the value of each table entry. For a table entry $DP[S,k_S,i,\boldsymbol Q_S^{enter},\\ \boldsymbol Q_{Sout}^{enter},\boldsymbol Q_S^{leave},\boldsymbol Q_{Sout}^{leave}]$, we consider all possibilities based on the parameters $k_S',k_{S_i},\boldsymbol Q_S^{enter'},\boldsymbol Q_{Sout}^{enter'},\boldsymbol Q_S^{leave'},\\ \boldsymbol Q_{Sout}^{leave'},\boldsymbol Q_{S_iout}^{enter*},\boldsymbol Q_{S_iout}^{leave*}$. Since each of these $O({6l+2})$ parameters are non-negative integers bounded above by $n$, there are $O(n^{6l})$ cases to try all possible values of these parameters to satisfy certain constraints.

Thus, the running time of our dynamic programming algorithm is bounded by $O(n^{4l+3}\log n\cdot n^{6l})=O(n^{10l}\log n)$. Moreover, for the set $F$ of opened facilities obtained by dynamic programming process, the clients can be assigned to $F$ by applying the min-cost max-flow method in $poly(n)$ time.

For the instance $(X, dist, \mathcal{C}, \mathcal{F}, k,l, \boldsymbol \alpha, \boldsymbol \beta)$ of the fair $k$-median problem in general metrics, a solution of cost at most $O(\log k)cost(OPT)$  can be obtained in time $n^{O(l)}$.

\end{proof}

\section{Conclusions}
In this paper, we study the fair $k$-median problem in doubling metrics and general metrics. We present a $(1+\epsilon)$-approximation algorithm that runs in QPTAS. We further show that our algorithm works in a more general metrics and we present an $O(\log k)$-approximation algorithm that runs in time $n^{O(l)}$.
Our two approximation algorithms for the fair $k$-median problem are the first results for the corresponding problems without any fair violation, respectively.


\bibliographystyle{abbrvnat}
\bibliography{fair-arx}

\begin{thebibliography}{19}
\providecommand{\natexlab}[1]{#1}
\providecommand{\url}[1]{\texttt{#1}}
\expandafter\ifx\csname urlstyle\endcsname\relax
  \providecommand{\doi}[1]{doi: #1}\else
  \providecommand{\doi}{doi: \begingroup \urlstyle{rm}\Url}\fi

\bibitem[Abbasi et~al.(2021)Abbasi, Bhaskara, and
  Venkatasubramanian]{abbasi2021fair}
M.~Abbasi, A.~Bhaskara, and S.~Venkatasubramanian.
\newblock Fair clustering via equitable group representations.
\newblock In \emph{Proc. 4th ACM Conference on Fairness, Accountability, and
  Transparency}, pages 504--514, 2021.

\bibitem[Ahmadian et~al.(2019)Ahmadian, Epasto, Kumar, and
  Mahdian]{ahmadian2019clustering}
S.~Ahmadian, A.~Epasto, R.~Kumar, and M.~Mahdian.
\newblock Clustering without over-representation.
\newblock In \emph{Proc. 25th ACM SIGKDD International Conference on Knowledge
  Discovery and Data Mining}, pages 267--275, 2019.

\bibitem[Ahmadian et~al.(2020)Ahmadian, Epasto, Knittel, Kumar, Mahdian,
  Moseley, Pham, Vassilvitskii, and Wang]{ahmadian2020fair}
S.~Ahmadian, A.~Epasto, M.~Knittel, R.~Kumar, M.~Mahdian, B.~Moseley, P.~Pham,
  S.~Vassilvitskii, and Y.~Wang.
\newblock Fair hierarchical clustering.
\newblock In \emph{Proc. 34th International Conference on Neural Information
  Processing Systems}, 2020.

\bibitem[Backurs et~al.(2019)Backurs, Indyk, Onak, Schieber, Vakilian, and
  Wagner]{backurs2019scalable}
A.~Backurs, P.~Indyk, K.~Onak, B.~Schieber, A.~Vakilian, and T.~Wagner.
\newblock Scalable fair clustering.
\newblock In \emph{Proc. 36th International Conference on Machine Learning},
  pages 405--413, 2019.

\bibitem[Bartal and Gottlieb(2013)]{bartal2013linear}
Y.~Bartal and L.-A. Gottlieb.
\newblock A linear time approximation scheme for euclidean tsp.
\newblock In \emph{Proc. 54th Annual Symposium on Foundations of Computer
  Science}, pages 698--706, 2013.

\bibitem[Bartal et~al.(2016)Bartal, Gottlieb, and
  Krauthgamer]{bartal2016traveling}
Y.~Bartal, L.-A. Gottlieb, and R.~Krauthgamer.
\newblock The traveling salesman problem: low-dimensionality implies a
  polynomial time approximation scheme.
\newblock \emph{SIAM Journal on Computing}, 45\penalty0 (4):\penalty0
  1563--1581, 2016.

\bibitem[Behsaz et~al.(2019)Behsaz, Friggstad, Salavatipour, and
  Sivakumar]{behsaz2019approximation}
B.~Behsaz, Z.~Friggstad, M.~R. Salavatipour, and R.~Sivakumar.
\newblock Approximation algorithms for min-sum $k$-clustering and balanced
  $k$-median.
\newblock \emph{Algorithmica}, 81\penalty0 (3):\penalty0 1006--1030, 2019.

\bibitem[Bera et~al.(2019)Bera, Chakrabarty, Flores, and Negahbani]{BeraCFN19}
S.~K. Bera, D.~Chakrabarty, N.~Flores, and M.~Negahbani.
\newblock Fair algorithms for clustering.
\newblock In \emph{Proc. 33rd International Conference on Neural Information
  Processing Systems}, pages 4955--4966, 2019.

\bibitem[Bercea et~al.(2019)Bercea, Gro{\ss}, Khuller, Kumar, R{\"{o}}sner,
  Schmidt, and Schmidt]{Bercea0KKRS019}
I.~O. Bercea, M.~Gro{\ss}, S.~Khuller, A.~Kumar, C.~R{\"{o}}sner, D.~R.
  Schmidt, and M.~Schmidt.
\newblock On the cost of essentially fair clusterings.
\newblock In \emph{Proc. 22nd International Conference on Approximation
  Algorithms for Combinatorial Optimization Problems and 23rd International
  Conference on Randomization and Computation}, pages 18:1--18:22, 2019.

\bibitem[Chen et~al.(2019)Chen, Fain, Lyu, and
  Munagala]{chen2019proportionally}
X.~Chen, B.~Fain, L.~Lyu, and K.~Munagala.
\newblock Proportionally fair clustering.
\newblock In \emph{Proc. 36th International Conference on Machine Learning},
  pages 1032--1041, 2019.

\bibitem[Chierichetti et~al.(2017)Chierichetti, Kumar, Lattanzi, and
  Vassilvitskii]{Chierichetti0LV17}
F.~Chierichetti, R.~Kumar, S.~Lattanzi, and S.~Vassilvitskii.
\newblock Fair clustering through fairlets.
\newblock In \emph{Proc. 31st International Conference on Neural Information
  Processing Systems}, pages 5029--5037, 2017.

\bibitem[Cohen-Addad et~al.(2019)Cohen-Addad, Feldmann, and
  Saulpic]{cohen2019near}
V.~Cohen-Addad, A.~E. Feldmann, and D.~Saulpic.
\newblock Near-linear time approximations schemes for clustering in doubling
  metrics.
\newblock In \emph{Proc. 60th Annual Symposium on Foundations of Computer
  Science}, pages 540--559, 2019.

\bibitem[Esmaeili et~al.(2020)Esmaeili, Brubach, Tsepenekas, and
  Dickerson]{esmaeili2020probabilistic}
S.~Esmaeili, B.~Brubach, L.~Tsepenekas, and J.~Dickerson.
\newblock Probabilistic fair clustering.
\newblock \emph{Proc. 34th International Conference on Neural Information
  Processing Systems}, 33, 2020.

\bibitem[Fakcharoenphol et~al.(2004)Fakcharoenphol, Rao, and
  Talwar]{fakcharoenphol2004tight}
J.~Fakcharoenphol, S.~Rao, and K.~Talwar.
\newblock A tight bound on approximating arbitrary metrics by tree metrics.
\newblock \emph{Journal of Computer and System Sciences}, 69\penalty0
  (3):\penalty0 485--497, 2004.

\bibitem[Ghadiri et~al.(2021)Ghadiri, Samadi, and Vempala]{ghadiri2021socially}
M.~Ghadiri, S.~Samadi, and S.~Vempala.
\newblock Socially fair $k$-means clustering.
\newblock In \emph{Proc. 4th ACM Conference on Fairness, Accountability, and
  Transparency}, pages 438--448, 2021.

\bibitem[Harb and Lam(2020)]{harb2020kfc}
E.~Harb and H.~S. Lam.
\newblock {KFC}: A scalable approximation algorithm for $k$-center fair
  clustering.
\newblock In \emph{Proc. 34th International Conference on Neural Information
  Processing Systems}, 2020.

\bibitem[Kuhn(1955)]{kuhn1955hungarian}
H.~W. Kuhn.
\newblock The hungarian method for the assignment problem.
\newblock \emph{Naval research logistics quarterly}, 2\penalty0 (1-2):\penalty0
  83--97, 1955.

\bibitem[Munkres(1957)]{munkres1957algorithms}
J.~Munkres.
\newblock Algorithms for the assignment and transportation problems.
\newblock \emph{Journal of the society for industrial and applied mathematics},
  5\penalty0 (1):\penalty0 32--38, 1957.

\bibitem[Talwar(2004)]{talwar2004bypassing}
K.~Talwar.
\newblock Bypassing the embedding: algorithms for low dimensional metrics.
\newblock In \emph{Proc. 36th Annual ACM symposium on Theory of computing},
  pages 281--290, 2004.

\end{thebibliography}

\end{document}